\documentclass[journal]{IEEEtran}

\usepackage{cite}
\usepackage{amsmath,amssymb,amsfonts}
\usepackage{amsthm}
\usepackage[linesnumbered,ruled,vlined]{algorithm2e}
\usepackage{enumitem}
\usepackage{graphicx}
\usepackage{textcomp}
\usepackage{dsfont}
\usepackage{xcolor}
\usepackage{bm}
\usepackage{soul}
\def\BibTeX{{\rm B\kern-.05em{\sc i\kern-.025em b}\kern-.08em
    T\kern-.1667em\lower.7ex\hbox{E}\kern-.125emX}}

\usepackage{changepage}
\usepackage{etoolbox}
\usepackage{subfigure}
\usepackage{ifthen}
\newboolean{showcomments}
\setboolean{showcomments}{true}

\newcommand{\Meng}[1]{\ifthenelse{\boolean{showcomments}}
{ \textcolor{blue}{(Meng says:  #1)}}{}}

\newcommand{\RY}[1]{\ifthenelse{\boolean{showcomments}}
{ \textcolor{blue}{(Roy says:  #1)}}{}}

\newcommand{\MQ}[1]{\ifthenelse{\boolean{showcomments}}
{ \textcolor{blue}{(Mengqiu:  #1)}}{}}

\usepackage{mathtools}
\SetKwComment{Comment}{$//$}{}

\newcommand{\bs}{\boldsymbol}

\DeclarePairedDelimiter{\set}{\{}{\}}

\DeclareMathOperator*{\argmin}{argmin}

\newcommand{\nn}{\nonumber\\}

\newtheorem{theorem}{Theorem}

\newtheorem{remark}{Remark}
\newtheorem{proposition}{Proposition}
\newtheorem{lemma}{Lemma}

\newtheorem{definition}{Definition}
\newtheorem{assumption}{Assumption}

\newtheorem*{question}{Key Question}
\newtheorem{corollary}{Corollary}

\begin{document}
\title{
Timely Best Arm Identification in Restless Shared Networks
}

\author{Mengqiu~Zhou,~\IEEEmembership{Student Member,~IEEE,}
Vincent~Y.~F.~Tan,~\IEEEmembership{Senior Member,~IEEE,}
and~Meng~Zhang,~\IEEEmembership{Member,~IEEE}
\thanks{Mengqiu Zhou is with the College of Information Science and Electronic Engineering and ZJU-UIUC Institute, Zhejiang University, China (e-mail: mengqiuzhou@zju.edu.cn).}
\thanks{Vincent Y. F. Tan is with the Department of Mathematics, the Department of Electrical and Computer
Engineering, and the Institute of Operations Research and Analytics, National University of Singapore (e-mail: vtan@nus.edu.sg).}
\thanks{Meng Zhang is with the ZJU-UIUC Institute, Zhejiang University, China (e-mail: mengzhang@intl.zju.edu.cn).}
}

\markboth{Journal of \LaTeX\ Class Files,~Vol.~14, No.~8, August~2021}%
{Shell \MakeLowercase{\textit{et al.}}: A Sample Article Using IEEEtran.cls for IEEE Journals}


\maketitle

\begin{abstract}

    Real-time status updating applications increasingly rely on networks of devices and edge nodes to maintain data freshness, as quantified by the age of information (AoI) metric. Given that edge computing nodes exhibit uncertain and time-varying dynamics, it is essential to identify the optimal edge node with high confidence and sample efficiency, even without prior knowledge of these dynamics, to ensure timely updates.
    To address this challenge, we introduce the first best arm identification (BAI) problem aimed at minimizing the long-term average AoI under a fixed confidence setting, framed within the context of a restless multi-armed bandit (RMAB) model. In this model, each arm evolves independently according to an unknown Markov chain over time, regardless of whether it is selected.
    To capture the temporal trajectories of AoI in the presence of unknown restless dynamics, we develop an age-aware LUCB algorithm that incorporates Markovian sampling. Additionally, we establish an instance-dependent upper bound on the sample complexity, which captures the difficulty of the problem as a function of the underlying Markov mixing behavior.
    Moreover, we derive an information-theoretic lower bound to characterize the fundamental challenges of the problem. We show that the sample complexity is influenced by the temporal correlation of the Markov dynamics, aligning with the intuition offered by the upper bound.
    Our numerical results show that, compared to existing benchmarks, the proposed scheme significantly reduces sampling costs, particularly under more stringent confidence levels.
\end{abstract}
\begin{IEEEkeywords}
age of information, restless multi-armed bandit, best arm identification
\end{IEEEkeywords}

\section{Introduction}
\subsection{Background and Motivations}
Information freshness is becoming increasingly significant due to the rapid growth of real-time applications.
For instance, in vehicular networks, timely status updates regarding traffic conditions and vehicle status are vital for reliable autonomous driving~\cite{liu2020computing}.
Similarly, industrial control systems depend on immediate alerts of equipment anomalies, such as overheating and abnormal pressure levels, to maintain operational safety~\cite{moyne2007emergence}.
In addition, real-time health monitoring updates from wearable devices enables prompt medical interventions~\cite{kakria2015real}.
To quantify data freshness, the \textit{age of information (AoI)} metric has been introduced \cite{yates2021age}, which measures the time that has elapsed since the most recent data update.

Fresh data delivery in such applications critically relies on Mobile Edge Computing (MEC) infrastructures~\cite{ren2019survey}, where computational tasks are offloaded to edge nodes such as base stations or roadside units located at the network edge.
A fundamental challenge in public edge networks lies in the uncertainty of shared resources. 
Unlike private servers, edge nodes operate as shared infrastructure serving a large number of background users.
In practice, users are typically served through virtualized compute slices with standardized resource specifications at each edge node, while performance variability primarily arises from time-varying background workloads.

From the perspective of a specific user (e.g., a connected vehicle), the internal congestion state of an edge node is unobservable and evolves continuously over time, regardless of whether the user selects it. 
As a result, a node that delivered fresh updates moments ago may become heavily congested due to sudden bursts of background traffic.
This creates a \emph{restless} environment in which the user must repeatedly decide which edge node to query without direct knowledge of its instantaneous load.

Extensive research has investigated AoI optimization focusing on task offloading strategy~\cite{chen2021information,chen2023joint}, transmission scheduling~\cite{kadota2018scheduling,hsu2018age,saurav2021minimizing} and resource allocation mechanism~\cite{yates2018age,chen2020age,tripathi2024whittle}.
Regrettably, all of these efforts rely on prior knowledge of system dynamics.
In particular, while recent pioneering studies have formulated AoI minimization problems within the restless multi-armed bandit (RMAB) framework and designed index-based scheduling policies~\cite{hsu2018age,tripathi2024whittle}, they assume known transition probabilities to enable tractable solutions such as Whittle’s index~\cite{whittle1988restless}.

In practical open edge networks, continuously learning or tracking the complete, time-varying transition models of all edge nodes is often computationally prohibitive and resource-intensive, as such information is typically hidden from individual users.
Moreover, although dynamic scheduling could theoretically exploit instantaneous idle slots, it is frequently impractical in stateful edge applications due to prohibitive switching overheads, including context migration and connection re-establishment costs.
Therefore, the crucial requirement in practice is to quickly and reliably identify a stable and high-performing edge node that ensures data freshness under unknown system dynamics, rather than continuously switching among nodes.
This motivates our key question:
\begin{question}
    How should one design a sample-efficient age-optimal best edge node identification strategy under unknown and time-varying dynamics?
\end{question}

To address the unobservability of network state, we model the internal dynamics of each edge node using a Hidden Markov Model (HMM), where the hidden state represents a congestion level induced by background traffic.
Unlike classical queueing models that assume observable queue lengths~\cite{kleinrock1974queueing}, our abstraction captures the limited visibility faced by users in realistic edge networks.
We formulate this problem as an age-optimal Best Arm Identification (BAI) task under a fixed-confidence setting within the RMAB framework, aiming to identify the edge node that minimizes the long-term average AoI with minimal sampling cost.


Solving this problem is particularly challenging due to the presence of unknown restless dynamics and temporal correlations in observations.
Classical concentration inequalities designed for i.i.d. samples (e.g., Chernoff \cite{chernoff1952measure} or Hoeffding \cite{hoeffding1963probability} bounds) are no longer applicable, and the autonomous evolution of unselected nodes further complicates learning under partial observations.

\subsection{Contributions}
Our key contributions are summarized as follows.
\begin{itemize}[leftmargin=*]
    \item \textit{Problem Formulation}: We propose a novel age-optimal best arm identification problem in the restless multi-armed bandit framework under a fixed-confidence setting.
    To the best of our knowledge, this is the first work to identify edge node performance under unknown and restless dynamics, with the objective of minimizing the average age of information.
    
    \item \textit{Age-Optimal Best Arm Identification}: 
    To overcome the nonconvexity of our age-optimal BAI problem which has a fractional cost, we transform it into an equivalent average cost problem.
    To handle the partial observability in RMAB, we develop an age-aware LUCB algorithm which incorporates a Markovian sampling strategy. 
    Further, we establish an instance-dependent upper bound on the sample complexity and show that the bound depends on the mixing behavior of underlying Markovian dynamics.

    \item \textit{Lower Bound:} 
    We derive a family of information-theoretic lower bounds on the sample complexity
    parameterized by a certain quantity known as the delay constraint $D$.
    We conduct a case study for a two-arm, two-state case with $D=2$. This study reveals that the fundamental sample complexity is critically shaped by the temporal correlation of the Markov dynamics, consistent with the mixing dependence observed in the upper bound analysis.
    
    \item \textit{Numerical Results}: We compare the age-optimal BAI scheme to three benchmarks in terms of sample complexity. Numerical results show that our scheme can reduce the sample complexity by up to $43\%$ relative to the benchmarks, with greater savings achieved under stricter confidence requirements.
    Moreover, the sample complexity gains become increasingly pronounced for harder instances characterized by smaller AoI gaps.
\end{itemize}

\section{Related Work}
In this section, we briefly review the literature related to our proposed age-optimal BAI. 
Related works can be classified into two categories: AoI and RMAB.
\subsection{Age of Information}
AoI, which measures the freshness of status update, has motivated extensive research since its introduction in \cite{kaul2012real}. 
Existing works along this line mainly focus on minimizing time-average AoI under a variety of system settings, including queueing networks (e.g., \cite{bedewy2019minimizing,talak2020age,bedewy2019age,yates2018age}), computing systems (e.g., \cite{arafa2019timely,zhou2024age}) and wireless networks (e.g., \cite{arafa2019age,sun2019sampling,sun2017update}).
Some pioneering works have recently adopted the restless multi-armed bandit (RMAB) framework to address AoI-aware source scheduling problems \cite{kadota2018optimizing,hsu2018age,tripathi2024whittle}, in which Whittle’s index is used to design low-complexity policies.
These approaches typically assume full knowledge of the underlying Markov transition model and rely on structural properties such as indexability to derive closed-form indices.
\textit{However, none of these works consider the age minimization problem under unknown system dynamics due to the unobservable congestion status, where such index-based methods are not directly applicable.}
To fill this gap, we study the best edge node identification problem under unknown Markovian dynamics, aiming to minimize the average AoI in this paper.

\subsection{Restless Multi-Armed Bandit}
The RMAB generalizes the classical bandit by allowing each arm to evolve continuously over time, 
regardless of selection.
Since its introduction by Whittle in \cite{whittle1988restless}, the RMAB has been extensively studied due to its modeling flexibility and analytical challenges.
Prior research on RMAB problems primarily focused on minimizing cumulative regret over a fixed horizon, has explored a variety of solution approaches, including index-based heuristic policies (e.g., \cite{guha2010approximation,liu2010indexability,akbarzadeh2022conditions}), 
regret-based online learning approaches (e.g., \cite{tekin2012online,ortner2012regret,jiang2023online,wang2023optimistic}), 
and reinforcement learning (e.g., \cite{wang2020restless,nakhleh2021neurwin}).
However, in mission-critical applications (e.g., autonomous driving) where data staleness can compromise safety, cumulative regret minimization is often insufficient.
In such scenarios, the BAI problem
which aims to identify the best arm with high confidence using as few samples as possible is more suitable, yet receives limited attention in the RMAB context.
To the best of our knowledge, only \cite{Karthik2023best} has addressed this issue.
\textit{Nevertheless, all of these efforts define the best arm as the one with the highest expected reward, without considering structured objectives such as Age of Information, which poses unique challenges as it requires tracking over temporal trajectories rather than instantaneous rewards.}

\section{System Model}
\label{sec:system model}
In this section, we consider a status update system illustrated in Fig.~\ref{fig:model}, consisting of a source, a scheduler, $K$ edge nodes with unknown congestion dynamics induced by background traffic, and a monitor.
The scheduler submits update packets to one of the edge nodes, each of which operates under a first-come-first-served (FCFS) discipline and is subject to additional, unobservable workloads generated by other users.
We now start with models for the restless edge node model and the time-average age metric in order to then formulate the age-optimal best arm identification problem via restless multi-armed bandit framework.
For any positive integer $A$, we use $[A]\triangleq\{1,2,\ldots,A\}$ to denote the set of integers up to $A$.
\begin{figure}[t]
    \centering
    \includegraphics[width=\linewidth]{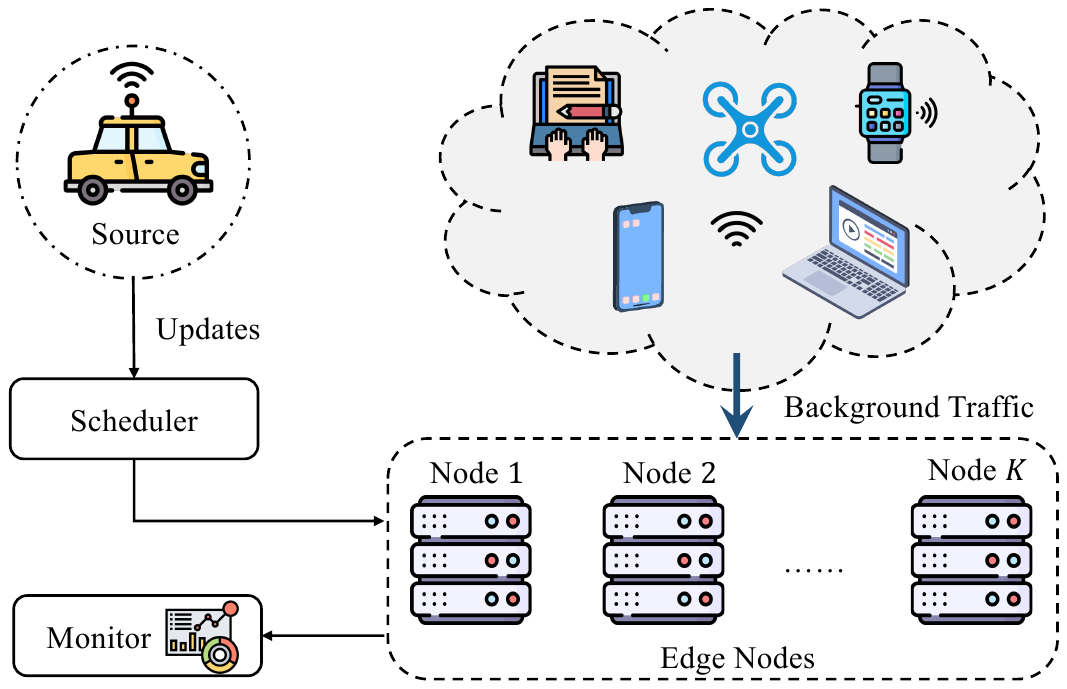}
    \caption{The system model. Each edge node is modeled as an arm in a restless multi-armed bandit framework, with unknown congestion dynamics.}
    \label{fig:model}
\end{figure}

\subsection{Restless Edge Node Model}\label{sec:restless_server}
We consider a cluster of $K$ edge nodes, each of which is shared by multiple services and is responsible for processing update packets from external sources in the network.
Suppose the $i$th update packet is submitted to edge node $a\in[K]$.
At that instant, the node is characterized by a congestion state
$X_{a,i}\in\mathcal{S}=\{0,1,\ldots,S\}$, which represents the background workload and contention effects induced by other users in the network.

We model the background traffic dynamics of each node as a stationary ergodic Markov chain with unknown transition probability denoted by $P_a(s'|s)=\mathrm{Pr}[X_{a,i+1}=s'|X_{a,i}=s]$.
Due to the lack of visibility into the internal congestion levels in practice, the scheduler does not observe $X_{a,i}$ directly and instead observes the service delay $Y_{a,i}\in\mathcal{Y}$ upon completion of update $i$ processed by edge node $a$.
As a result, the pair $(X_{a,i},Y_{a,i})$ forms a Hidden Markov Model (HMM), where the emission probability is defined as $\mathrm{Pr}\left[Y_{a,i}=y|X_{a,i}=s\right]$.

Motivated by network slicing techniques~\cite{9355609,10173679} widely applied in practical mobile edge computing systems, the user is served by a standardized compute slice (e.g., a container with fixed CPU/bandwidth quotas) at each node.
In such virtualized environments, the service delay experienced by the user is primarily governed by background congestion.
Accordingly, we start with adopting a deterministic emission model as a first-order approximation, where $\Pr[Y_{a,i}=y\mid X_{a,i}=s]=1$ and the observation space is $\mathcal{Y}=\{f(0),f(1),\ldots,f(S)\}$.

The congestion state $X_{a,i}$ is driven by exogenous background traffic.
In a spatially distributed edge network, traffic patterns across different locations (e.g., business districts versus residential areas) are typically weakly correlated. We therefore assume that the state transitions of different edge nodes are independent.
Under this setting, the transition probability of each arm varies primarily
in its background arrival characteristics, which are commonly modeled as Poisson
processes.
To capture a broad class of transition behaviors while retaining analytical tractability, we assume that each node's transition probability matrix is generated from a one-parameter exponential family studied in \cite{moulos2019optimal}.

We introduce a real-valued parameter $\theta_a\in\mathbb{R}$ to characterize the transition dynamics of each arm $a$ and denote its transition probability matrix by $P_{\theta_a}$.
Specifically, we fix an irreducible base transition matrix $P$ over $\mathcal{S}$ and construct the unnormalized matrix for any edge node $a$ given as
\begin{align}
    \tilde{P}_{\theta_a}(s'|s)\triangleq P(s'|s)\exp(\theta_a\cdot f(s')), \quad\forall s,s'\in\mathcal{S}. \label{eq:tilde_P}
\end{align}
We denote the full collection of arm-specific parameters by $\bs{\theta}\triangleq[\theta_1,\ldots,\theta_K]^{\top}$ and refer to $\bs{\theta}$ as the problem instance.

Note that $\tilde{P}_{\theta_a}$ defined in \eqref{eq:tilde_P} is not a valid stochastic transition probability matrix as its rows do not necessarily sum up to one.
To address this, we normalize \eqref{eq:tilde_P} by invoking Perron-Frobenius theory.
For each $\theta_a$, let $\rho(\theta_a)$ denote the Perron–Frobenius eigenvalue of $\tilde{P}_{\theta_a}$ and $\bs{v}_{\theta_a}=[v_{\theta_a}(s)\colon s\in\mathcal{S}]^\top$ be its corresponding unique positive right eigenvector~\cite[Theorem 8.4.4]{horn2012matrix}.
The normalized transition probability matrix $P_{\theta_a}$ is specified by
\begin{align}\label{eq:P_theta}
    P_{\theta_a}(s'|s)=\frac{v_{\theta_a}(s')}{\rho(\theta_a)v_{\theta_a}(s)}\tilde{P}_{\theta_a}(s'|s), \quad s,s'\in\mathcal{S}.
\end{align}

To ensure that each $P_{\theta_a}$ defines as an ergodic Markov chain, we begin by defining $\bar{f}=\max\limits_{s} f(s)$ and $\underline{f}=\min\limits_{s} f(s)$, and the associated level sets $\mathcal{S}_{\bar{f}}=\{s\in \mathcal{S}\colon f(s)=\bar{f}\}$ and $\mathcal{S}_{\underline{f}}=\{s\in \mathcal{S}\colon f(s)=\underline{f}\}$.
We then impose the following mild conditions on the stochastic matrix $P$.
\begin{assumption}\label{assump:P}
    We assume that the matrix $P$ satisfies:
    \begin{itemize}
        \item[\textbf{A1:}] The submatrix of $P$ with rows and columns in $\mathcal{S}_{\bar{f}}$ is irreducible. 
        \item[\textbf{A2:}] For every $s\in \mathcal{S}\setminus\mathcal{S}_{\bar{f}}$, there exists $s'\in\mathcal{S}_{\bar{f}}$ such that $P(s'|s)>0$. 
        \item[\textbf{A3:}] The submatrix of $P$ with rows and columns in $\mathcal{S}_{\underline{f}}$ represents an irreducible Markov chain. 
        \item[\textbf{A4:}] For every $s\in \mathcal{S}\setminus\mathcal{S}_{\underline{f}}$, there exists $s'\in\mathcal{S}_{\underline{f}}$ such that $P(s'|s)>0$. 
    \end{itemize}
\end{assumption}
Assumption \ref{assump:P} accommodates a wide range of models as it only requires partial irreducibility and accessibility to irreducible subsets.
In particular, any transition matrix $P$ with strictly positive entries trivially satisfies the above conditions.
Based on \cite{meyn2012markov}, we establish the following lemma.
\begin{lemma}\label{lemma:P_theta}
    When $P$ represent an irreducible transition probability matrix on the finite state space $\mathcal{S}$ satisfying Assumption~\ref{assump:P}, for any $\theta_a$, the matrix $P_{\theta_a}$ is irreducible and positive recurrent. 
\end{lemma}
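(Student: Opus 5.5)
The plan is to reduce everything to the sparsity pattern of $P_{\theta_a}$ and then use the standard fact that an irreducible chain on a finite state space is positive recurrent \cite{meyn2012markov}.

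\textbf{Step 1: well-definedness of the normalization.} Since $P$ is irreducible and nonnegative, and $\exp(\theta_a f(s'))>0$, the matrix $\tilde{P}_{\theta_a}$ in \eqref{eq:tilde_P} has exactly the same zero pattern as $P$. Hence $\tilde{P}_{\theta_a}$ is an irreducible nonnegative matrix. By the Perron--Frobenius theorem \cite[Theorem 8.4.4]{horn2012matrix}, its spectral radius satisfies $\rho(\theta_a)>0$, and it has a right eigenvector $\bs{v}_{\theta_a}$ with strictly positive entries, unique up to scaling. Every denominator in \eqref{eq:P_theta} is therefore positive, so $P_{\theta_a}$ is well defined and nonnegative.

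\textbf{Step 2: $P_{\theta_a}$ is stochastic.} For each $s$, the eigen-equation $\tilde{P}_{\theta_a}\bs{v}_{\theta_a}=\rho(\theta_a)\bs{v}_{\theta_a}$ gives
\[
\sum_{s'}P_{\theta_a}(s'|s)=\frac{(\tilde{P}_{\theta_a}\bs{v}_{\theta_a})(s)}{\rho(\theta_a)v_{\theta_a}(s)}=1 .
\]

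\textbf{Step 3: irreducibility.} $P_{\theta_a}$ is obtained from $\tilde{P}_{\theta_a}$ by a diagonal similarity $D^{-1}\tilde{P}_{\theta_a}D/\rho(\theta_a)$ with $D=\mathrm{diag}(\bs{v}_{\theta_a})$, where all diagonal entries are positive. So $P_{\theta_a}(s'|s)>0$ if and only if $P(s'|s)>0$. The transition graphs of $P_{\theta_a}$ and $P$ therefore coincide, and $P_{\theta_a}$ inherits irreducibility from $P$.

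\textbf{Step 4: positive recurrence.} A finite-state irreducible Markov chain is positive recurrent, since it has a unique stationary distribution with all entries positive \cite{meyn2012markov}. This finishes the argument.

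\textbf{Where the difficulty lies.} None of the steps is a real obstacle; the only care needed is ensuring the Perron eigenvector is strictly positive, which is exactly what irreducibility of $\tilde{P}_{\theta_a}$ guarantees. Assumption~\ref{assump:P} is not actually needed for finite $\theta_a$ once $P$ itself is assumed irreducible. Its role appears to be controlling the limits $\theta_a\to\pm\infty$: there $P_{\theta_a}$ concentrates on $\mathcal{S}_{\bar f}$ or $\mathcal{S}_{\underline f}$, and A1--A4 keep the limiting chains irreducible on those level sets with the remaining states transient into them. If the authors intend a uniform-in-$\theta$ statement, I would handle the limits via the asymptotics $\rho(\theta)\sim e^{\theta\bar f}\rho_{\bar f}$, where $\rho_{\bar f}$ is the Perron root of the $\mathcal{S}_{\bar f}$ submatrix. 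I would then check positivity of the limiting eigenvector using A2, and symmetrically use A3--A4 for $\theta\to-\infty$. For the lemma as stated, Steps 1--4 suffice.
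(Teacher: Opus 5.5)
Your proof is correct and is essentially the argument the paper intends. The paper writes out no proof and only points to \cite{meyn2012markov} for the fact that a finite irreducible chain is positive recurrent. Your Steps 1--3 supply the details it leaves implicit: the exponential tilt and the diagonal similarity by the strictly positive Perron vector preserve the support of $P$, so $P_{\theta_a}$ is a stochastic matrix with the same transition graph as $P$. You are also right that Assumption~\ref{assump:P} is not needed for finite $\theta_a$ once $P$ is irreducible.
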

Lemma~\ref{lemma:P_theta} implies that $P_{\theta_a}$ has a unique stationary distribution, which we denote by $\mu_{\theta_a}=[\mu_{\theta_a}(s)\colon s\in\mathcal{S}]^\top$.
Since the underlying congestion state of each node continues to evolve even when not selected, we adopt a RMAB framework by modeling each node as an independent arm whose state evolves over a finite state space $\mathcal{S}$ according to a discrete-time, time-homogeneous Markov process.

\subsection{Time-Average Age}
An \textit{age of information (AoI)} metric is used to characterize timeliness by the vector of ages tracked by monitors~\cite{yates2021age}.
As depicted in Fig.~\ref{fig:model}, the updating system has a source that submits updates with computational requirements to a scheduler, which selects one of the $K$ edge nodes for processing before delivering those updates to a destination monitor.

Suppose the source observes an acknowledgement from the monitor and generate fresh updates as a stochastic process at times $\set{t_i \colon i\in\mathbb{N}}$, with each update time-stamped $t_i$.
We first consider the zero-waiting policy.\footnote{Optimizing the waiting strategy in age-optimal BAI problem will be further investigated in our future work.}
These updates are processed by the node and delivered to the destination monitor at times $\set{t'_i \colon i\in\mathbb{N}}$.
This induces the AoI process $\Delta(t)$ has a characteristic sawtooth shape depicted in Fig.~\ref{fig:age}.
Denote the service time for update $i$ as $Y_{i}$, the AoI process $\Delta(t)$ at the destination monitor is reset at time $t'_i$ to $\Delta(t'_i)=Y_{i}$, the age of the update received at that time instant.

\begin{figure}[t]
    \centering
    \includegraphics[width=\linewidth]{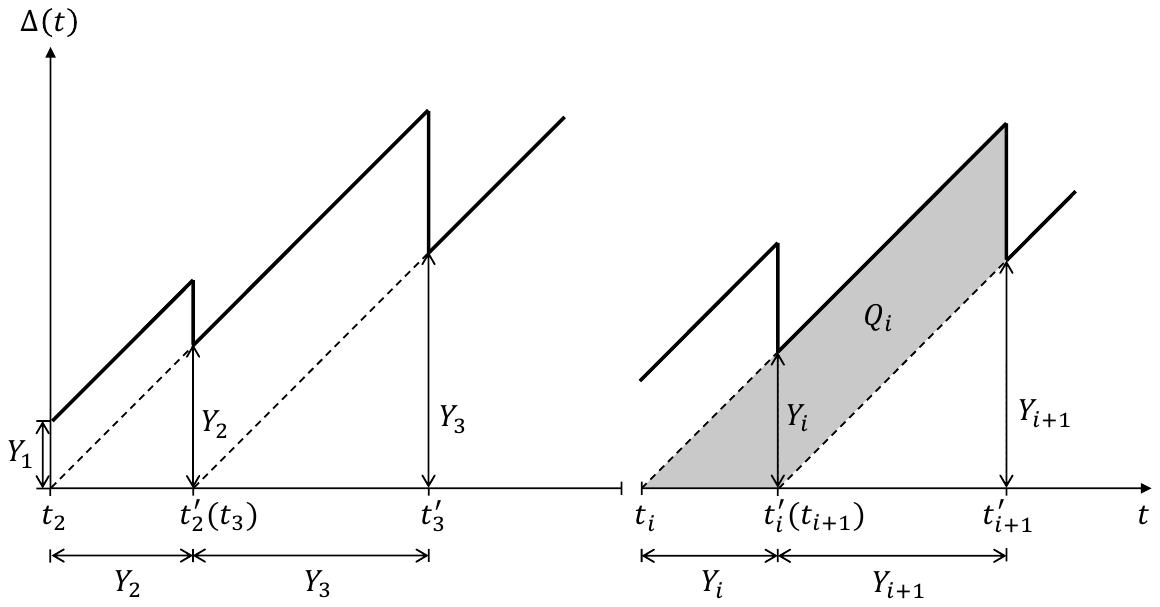}
    \caption{Age of information $\Delta(t)$ evolution in time under zero-wait policy.}
    \label{fig:age}
\end{figure}

AoI analysis was initiated in \cite{kaul2012real}, which defines the time-average AoI as
\begin{align}
    {\Delta}^{(\text{ave})}=\lim_{T\to\infty} \frac{1}{T}\int_0^T\Delta(t)\,dt.
\end{align}
This time-average was evaluated by graphically decomposing the integral into areas $Q_i$, as shown in Fig.~\ref{fig:age}. 
Following this approach, we define epoch $i$ as the time interval $[t_{i},t'_{i})$ and we observe from Fig.~\ref{fig:age} that associated with this epoch is the area 
\begin{IEEEeqnarray}{rCl} 
    Q_i=Q(Y_i,Y_{i+1})&\triangleq&
    \frac{1}{2}\left(Y_i+Y_{i+1}\right)^{2}
    -\frac{1}{2}Y_{i+1}^{2}\nn
    &=& \frac{1}{2}Y_i^2+Y_i Y_{i+1}.\IEEEeqnarraynumspace\label{eq:Q-defn} 
\end{IEEEeqnarray}
It follows that the long-term average AoI is thus given by \cite{yates2021age}:
\begin{align}
    {\Delta}^{(\text{ave})}\triangleq&\limsup_{n\rightarrow \infty}\frac{ \sum_{i=1}^n \mathbb{E}[Q(Y_i,Y_{i+1})]}{\sum_{i=1}^n \mathbb{E}[Y_i]}. \label{average-AoI}
\end{align}

\subsection{Problem Formulation}
Our goal is to minimize the long-term average AoI ${\Delta}^{(\mathrm{ave})}$ by selecting the best node $a\in[K]$ with high confidence and minimal samples. 
We use $Y_{a,i}$ to denote the service time experienced by the $i$th packet processed by node $a\in[K]$, then the problem can be formulated as 
\begin{align}\label{eq:problem_aoi}
    \min_{a\in[K]}\;\limsup_{n\rightarrow \infty}\frac{\sum_{i=1}^n \mathbb{E}[Q(Y_{a,i},Y_{a,i+1})]}{\sum_{i=1}^n \mathbb{E}[Y_{a,i}]}.
\end{align}
It follows from Lemma \ref{lemma:P_theta} that each node $a$ is associated with a Markov chain with a unique stationary distribution. Hence, we drop the index $n$ and Problem \eqref{eq:problem_aoi} can be simplified as
\begin{align}\label{eq:stationary_problem}
    \gamma^*\triangleq\min_{a\in[K]}\;\frac{\mathbb{E}[Q(Y_{a,i},Y_{a,i+1})]}{\mathbb{E}[Y_{a,i}]}.
\end{align}
In order to overcome the nonconvexity of the objective $\mathbb{E}[Q(Y_{a,i},Y_{a,i+1})]/\mathbb{E}[Y_{a,i}]$, we use the fractional programming technique by optimizing the expected value of the difference 
\begin{equation}\label{eq:Gdefn}
    D(Y_{a,i},Y_{a,i+1},\gamma) \triangleq Q(Y_{a,i},Y_{a,i+1})-\gamma Y_{a,i},
\end{equation}
where $\gamma$ is the Dinkelbach variable~\cite{dinkelbach1967nonlinear}.
With this definition, the Dinkelbach reformulation is
\begin{align}\label{eq:dinkel}
    J(\gamma)=\min_{a\in[K]}\;\mathbb{E}[D(Y_{a,i},Y_{a,i+1},\gamma)].
\end{align}

Given the optimal node $a^*$ to Problem \eqref{eq:stationary_problem}, we have $\mathbb{E}[D(Y_{a,i},Y_{a,i+1},\gamma^*)]\geq0$ and $\mathbb{E}[D(Y_{a^*,i},Y_{a^*,i+1},\gamma^*)]=0$, indicating that $a^*$ is also optimal to Problem~\eqref{eq:dinkel}. This establishes the equivalence of Dinkelbach's transformation in the stationary policy space.
The following lemma~\cite[Theorem~1]{dinkelbach1967nonlinear}
guarantees the equivalence of the transformed problem.
\begin{lemma}\cite[Theorem 1]{dinkelbach1967nonlinear}\label{Lemma_Dinkel}
When $\gamma^*$ is the optimum objective value of Problem \eqref{eq:stationary_problem}, Problem \eqref{eq:dinkel} with $\gamma=\gamma^*$ is equivalent to Problem \eqref{eq:stationary_problem} and $J(\gamma^*)=0$. 
\end{lemma}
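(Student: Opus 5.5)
The plan is the standard Dinkelbach argument, specialized to the finite arm set $[K]$. For each arm $a$, write $N_a \triangleq \mathbb{E}[Q(Y_{a,i},Y_{a,i+1})]$ and $M_a \triangleq \mathbb{E}[Y_{a,i}]$, both taken under the stationary distribution $\mu_{\theta_a}$ from Lemma~\ref{lemma:P_theta}. Because of stationarity these quantities do not depend on $i$. They are finite because $\mathcal{S}$ is finite. We also need $M_a>0$. This holds because service delays are positive, i.e.\ $f(s)>0$ for all $s$; I would state this explicitly as the implicit standing assumption, since otherwise the ratio in \eqref{eq:stationary_problem} is undefined. With this notation, Problem~\eqref{eq:stationary_problem} reads $\gamma^*=\min_a N_a/M_a$. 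Linearity of expectation and \eqref{eq:Gdefn} give $\mathbb{E}[D(Y_{a,i},Y_{a,i+1},\gamma)] = N_a-\gamma M_a$, so $J(\gamma)=\min_a (N_a-\gamma M_a)$.

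The first step shows $J(\gamma^*)=0$. For every $a$ we have $N_a/M_a\ge \gamma^*$. Multiplying by $M_a>0$ gives $N_a-\gamma^* M_a\ge 0$, hence $J(\gamma^*)\ge 0$. The minimum in \eqref{eq:stationary_problem} is attained over the finite set $[K]$, say at $a^*$. For that arm $N_{a^*}=\gamma^* M_{a^*}$, so the objective of \eqref{eq:dinkel} is $0$ there and $J(\gamma^*)\le 0$. Together these give $J(\gamma^*)=0$.

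The second step establishes that the two problems have the same minimizers, which is what equivalence means here. If $a^*$ minimizes \eqref{eq:stationary_problem}, then the first step shows it attains the value $0=J(\gamma^*)$ in \eqref{eq:dinkel}. Conversely, suppose $\hat a$ attains $J(\gamma^*)=0$, so that $N_{\hat a}-\gamma^* M_{\hat a}=0$. Dividing by $M_{\hat a}>0$ gives $N_{\hat a}/M_{\hat a}=\gamma^*$, so $\hat a$ is optimal for \eqref{eq:stationary_problem}. Hence the two argmin sets coincide.

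No step is a real obstacle. The only point that needs care is justifying the reduction from the $\limsup$ ratio in \eqref{eq:problem_aoi} to the stationary ratio in \eqref{eq:stationary_problem}, which the paper already takes for granted. If it had to be justified, I would use ergodicity from Lemma~\ref{lemma:P_theta}: the Ces\`aro averages of $\mathbb{E}[Q(Y_{a,i},Y_{a,i+1})]$ and of $\mathbb{E}[Y_{a,i}]$ converge to their stationary values even from an arbitrary initial state. The ratio of these averages then converges, since the denominator's limit $M_a$ is positive. The positivity of $M_a$ is the one hypothesis the argument genuinely uses.
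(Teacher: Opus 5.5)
Your proof is correct and takes the same route as the paper. The paper, citing Dinkelbach's Theorem~1, notes just before the lemma that $\mathbb{E}[D(Y_{a,i},Y_{a,i+1},\gamma^*)]\ge 0$ for every arm with equality at $a^*$, which is your first step. Your converse direction (any minimizer of \eqref{eq:dinkel} at $\gamma^*$ solves \eqref{eq:stationary_problem}) and the explicit requirement $\mathbb{E}[Y_{a,i}]>0$ are left implicit in the paper, and stating them makes the claimed equivalence complete.
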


Lemma \ref{Lemma_Dinkel} shows the possibility of equivalently reformulating the fractional objective into one with an average cost. In addition, the Dinkelbach's method suggests that the optimum objective $\gamma^*$ of Problem \eqref{eq:stationary_problem} satisfies  $J(\gamma^*)=0$ and hence conducting a line search over $\gamma$ leads to the optimal arm $a^*$.

We note that the average-cost objective is essential for the best arm online learning that will be discussed in Section~\ref{sec:algorithm}.
Unlike traditional MAB formulations that rely on immediate rewards, our objective is tightly coupled with the underlying Markovian state transitions.
This dependency requires reasoning over long-term temporal trajectories under unknown node dynamics rather than single-step outcomes, which is significantly more challenging as it deviates from standard bandit assumptions and makes sampling strategies based on instantaneous rewards inapplicable.
\color{black}

\section{Best Arm Identification}
To identify the best node $a\in[K]$ that minimizes the long-term average AoI ${\Delta}^{(\text{ave})}$ with high confidence and sample efficiency, we formulate our problem as a BAI task under the fixed-confidence setting. 

\subsection{Definition of Age-Optimal Best Arm}
Following with the restless edge node model in Section \ref{sec:restless_server}, given the stationary distribution $\mu_{\theta_a}$ of arm $a$, Problem \eqref{eq:dinkel} is equivalent to
\begin{align}\label{eq:problem_stationary}
    \min_{a\in[K]} \;&\sum\limits_{s\in\mathcal{S}}\mu_{\theta_a}(s)\Big[\frac{1}{2}f^2(s)-\gamma f(s)\Big]\nn
    &+\sum\limits_{s,s'\in\mathcal{S}}\mu_{\theta_a}(s)P_{\theta_a}(s'|s)f(s)f(s').
\end{align}
To facilitate analysis, we decompose the objective in \eqref{eq:problem_stationary} into a transition-independent term $\eta_{\theta_a}(\gamma)$ and transition-dependent term $\Gamma_{\theta_a}$, given as
\begin{align}
    \eta_{\theta_a}(\gamma)&\triangleq\sum_{s\in\mathcal{S}}\left(\frac{1}{2}f^2(s)-\gamma f(s)\right)\mu_{\theta_a}(s), \quad a\in[K], \label{eq:mean}\\    \Gamma_{\theta_a}&\triangleq\sum_{s\in\mathcal{S}}f(s)f(s')P_{\theta_a}(s'|s)\mu_{\theta_a}(s),\quad a\in[K]. \label{eq:covariance}
\end{align}
Therefore, Problem \eqref{eq:problem_stationary} reduces to
\begin{align}\label{eq:aoi-bandit}
    \min_{a\in[K]}~g_{\theta_a}(\gamma)\triangleq\eta_{\theta_a}(\gamma)+\Gamma_{\theta_a}.
\end{align}
Given the Dinkelbach variable $\gamma$, we define the age-optimal best arm $a^*(\bs{\theta},\gamma)$ under instance $\bs{\theta}\triangleq[\theta_1,\ldots,\theta_K]^{\top}$ as
\begin{align}\label{eq:best_arm}
    a^*(\bs{\theta},\gamma)\triangleq\argmin\limits_{a\in[K]}\;  g_{\theta_a}(\gamma). 
\end{align}

Our goal is to identify $a^*(\bs{\theta}, \gamma)$ in a sample-efficient manner without any prior knowledge of the instance $\bs{\theta}$.
We formulate this BAI problem under a fixed-confidence setting, where the scheduler aims to identify the best arm with as few samples as possible while ensuring that the misidentification probability is limited to a pre-specified confidence level.

\subsection{Best Arm Identification Policy}
To identify the best arm, the scheduler sequentially selects edge nodes to process update packets and observes the resulting service times.
Let $A_n\in[K]$ denote the arm selected for the $n$th update and let $Y_{n}$ be the corresponding observed service time.
Note that we adopt the restless bandit setting, in which each arm evolves according to its internal Markovian dynamics, regardless of whether it is selected.
Let $\mathcal{F}_n$ 
be the $\sigma$-field generated by the sequence of past actions and observations up to update $n$, i.e.,
\begin{align}
    \mathcal{F}_n\triangleq\sigma\left(A_1,Y_{1},\ldots,A_n,Y_{n}\right).
\end{align}
A BAI policy under the fixed-confidence setting consists of:
\begin{itemize}
    \item \textit{Sampling rule} $\pi=\{\pi_n\}_{n\ge1}$: an $\mathcal{F}_{n-1}$-measurable rule that determines the arm $A_n$ to be selected for update $n$;
    \item \textit{Stopping rule} $\tau$: a stopping index adapts $\{\mathcal{F}_n\colon n\ge1\}$ indicating when to terminate the arm selection process;
    \item \textit{Decision rule} $a_{\tau}$: an $\mathcal{F}_{\tau}$-measurable rule that specifies the candidate best arm $a\in[K]$ at termination.
\end{itemize}
Given a confidence level $\delta\in(0,1)$, we define
\begin{align}\label{eq:set_policy}
    \Pi(\delta)\triangleq\Bigg\{(\pi,\tau,a_{\tau}):
    \begin{array}{l}
         \mathbb{P}_{\bs{\theta}}(a_{\tau}\neq a^*(\bs{\theta},\gamma))\leq\delta  \\
         \mathbb{P}_{\bs{\theta}}(\tau<\infty)=1 
    \end{array}
    \Bigg\}
\end{align}
as the set of all policies that terminate in finite time with probability one and upon termination outputs the best arm with probability at least $1-\delta$.
Our objective is to design a policy $\pi\in\Pi(\delta)$ that minimizes the expected sample complexity for solving Problem \eqref{eq:best_arm}, which is equivalent to characterize the value of
\begin{align}\label{eq:bai_sample}
    \inf_{\pi\in\Pi(\delta)}\, \mathbb{E}_{\bs{\theta}}[\tau_{\pi}].
\end{align}
We consider the regime in which $\delta\to0$.
However, the structural complexity of our AoI minimization problem originating from its dependence on long-term Markovian dynamics 
poses significant challenges for efficient algorithm design, as we discuss in the following section.


\section{Age-Optimal BAI Algorithm}\label{sec:algorithm}
We now describe the key ideas of our proposed age-optimal BAI algorithm which is a policy satisfying \eqref{eq:set_policy}.
The global structure of the age-optimal BAI algorithm
consists of three loops, including Markov Regeneration Sampling, Age-Aware LUCB procedure and Dinkelbach Update.
\begin{figure}[t]
    \centering
    \includegraphics[width=\linewidth]{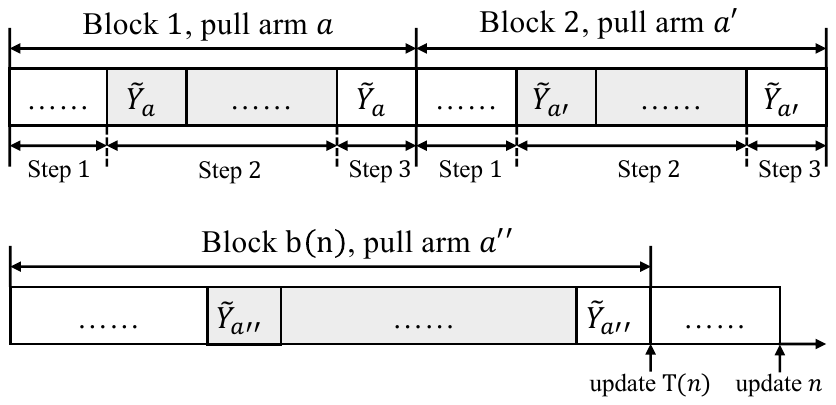}
    \caption{The structure of the Markov regeneration sampling strategy.}
    \label{fig:blocksample}
\end{figure}

\subsection{Markov Regeneration Sampling Strategy}
It follows from \eqref{eq:aoi-bandit} that our objective includes a transition-dependent term $\Gamma_{\theta_a}$, which requires tracking long-term temporal trajectories governed by unknown node congestion transitions.
To address this temporal dependence between consecutive samples, we propose a Markov regeneration sampling strategy inspired based on the regenerative process introduced in~\cite{smith1955regenerative}.

Since we consider the deterministic emission setting,
for each arm $a$, we define a regenerative state $\tilde{X}_a$ with associated observation $\tilde{Y}_a$, which marks both the beginning and end of a Markov regeneration process.


We define $b(n)$ as the number of completed blocks up to update $n$, and $T(n)$ as the update index marking the end of the most recently completed block.
As depicted in Fig.~\ref{fig:blocksample}, each sampling block $b\in\{1,2,\ldots\}$ consists of the following steps:
\begin{itemize}
    \item \textit{Initialization (Step 1)}: Pull arm $a$ repeatedly until the regenerative observation $\tilde{Y}_a$ is encountered;
    \item \textit{Regeneration Sampling (Step 2)}: Continue sampling arm $a$ and collecting observations until revisiting $\tilde{Y}_a$;
    \item \textit{Block Termination (Step 3)}: Close the block upon the second observation of $\tilde{Y}_a$.
\end{itemize}
We focus on the samples collected during the Markov regeneration process (\textit{Step $2$}) of each block.
Let $j_b$ denote the cumulative number of samples obtained in \textit{Step $2$} across the first $b$ blocks,
and $T_{j_b}^a$ represent the total number of such samples collected from arm $a$.
The corresponding observations are denoted by $Y_{a,1},Y_{a,2},\ldots,Y_{a,T_{j_b}^a}$, from which we construct the empirical estimators:
\begin{subequations}
\begin{align}
    \hat{\eta}_{a,T^a_{j_b}}(\gamma)&=\frac{1}{T^a_{j_b}}\sum\limits_{i=1}^{T^a_{j_b}}\left(\frac{1}{2}Y^2_{a,i}-\gamma Y_{a,i}\right),\label{eq:estimator_eta}\\
    \hat{\Gamma}_{a,T^a_{j_b}}&=\frac{1}{T^a_{j_b}}\sum\limits_{i=1}^{T^a_{j_b}}\left(Y_{a,i} Y_{a,i+1}\right).\label{eq:estimator_Gamma}
\end{align}
\end{subequations}
Hence, the empirical cost $\hat{g}_{\theta_a}(\gamma)$ of arm $a$ is given by
\begin{align}\label{eq:empirical age}
    \hat{g}_{\theta_a}(\gamma)=\hat{\eta}_{a,T^a_{j_b}}(\gamma)+\hat{\Gamma}_{a,T^a_{j_b}}.
\end{align}

To bootstrap the regeneration process, the first block for each arm skips \textit{Step 1} by treating the initial observation as the regenerative point. The overall procedure for executing Markov regeneration sampling and updating the empirical estimators is summarized in Algorithm~\ref{alg:rca}.
\begin{algorithm}[t]\label{alg:rca}
\caption{Markov Regeneration Sampling}
\KwIn{Dinkelbach’s variable $\gamma$, block index $b$, Step $2$ update index $j$\;
}
Given counters and estimators for arm $a$: $T^a_j$, $\hat{\eta}_a(\gamma)$ and $\hat{\Gamma}_a$ based on \eqref{eq:estimator_eta}, \eqref{eq:estimator_Gamma}\;
\Comment{Step 1: Initialization}
Play arm $a$ and observe $Y_{a}$\label{alg:rca_1}\;
\While{$Y_{a}\neq \tilde{Y}_{a}$}{
    $i\gets i+1$\;
    Play arm $a$ and observe $Y_{a}$\label{alg:rca_11}\;
}
\Comment{Step 2: Regeneration sampling}
$\hat{\eta}_{a}(\gamma)\gets \frac{\hat{\eta}_{a}(\gamma) T^a_j+(\frac{1}{2}\tilde{Y}^2_{a}-\gamma \tilde{Y}_{a})}{T^a_j+1}$, $\hat{\Gamma}_a\gets \frac{\hat{\Gamma}_a T^a_j+0}{T^a_j}$ \label{alg:rca_2}\;
$i\gets i+1$, $j\gets j+1$, $T^{a}_j\gets T^{a}_j+1$\;
Set $Y_p=\tilde{Y}_{a}$\;
Play arm $a$ and observe $Y_{a}$\;
\While{$Y_{a} \neq \tilde{Y}_{a}$}{
$\hat{\eta}_{a}(\gamma)\gets \frac{\hat{\eta}_{a}(\gamma) T^a_j+(\frac{1}{2}Y_{a}^2-\gamma Y_{a})}{T^a_j+1}$, $\hat{\Gamma}_{a}\gets \frac{\hat{\Gamma}_{a} (T^a_j-1)+Y_p Y_{a}}{T^a_j}$\;
$i\gets i+1$, $j\gets j+1$, $T^{a}_j\gets T^{a}_j+1$\;
Set $Y_{p}\gets Y_{a}$\;
Play arm ${a}$ and observe $Y_{a}$\label{alg:rca_22}\;
}
\Comment{Step 3: Block termination}
$\hat{\Gamma}_{a}\gets \frac{\hat{\Gamma}_{a} (T^a_j-1)+Y_p \tilde{Y}_{a}}{T^a_j}$\;
$b\gets b+1$, $i\gets i+1$ \label{alg:rca_end}\;
\end{algorithm}

\subsection{Age-Aware LUCB algorithm}
Based on the Markov regeneration sampling strategy, we now present the age-aware LUCB algorithm for identifying the age-optimal best arm with high confidence and sample efficiency.
A critical component in LUCB-type algorithms is the design of tight concentration bounds on the estimation, which support reliable confidence intervals that guide arm selection, comparison and stopping rules.

Unlike classical bandit settings with i.i.d. observations, our restless Markovian model involves arms whose underlying states evolve according to unknown and potentially non-reversible Markov chains.
This presents a fundamental challenge since standard concentration inequalities~\cite{chernoff1952measure,hoeffding1963probability} are no longer applicable due to the temporal correlations in the observations.

To obtain tight concentration bounds under Markovian dependence, it is essential to account for the mixing behavior of the chain, which captures the convergence rate to the stationary distribution.
To this end, we adopt the notion of the pseudo spectral gap~\cite{paulin2015concentration}, which extends spectral analysis to non-reversible Markov chains.
\begin{definition}[Pseudo Spectral Gap]
    Given the transition probability matrix $P_{\theta_a}$ of arm $a$ with stationary distribution $\mu_{\theta_a}$, let $P'_{\theta_a}$ denote its adjoint operator with respect to the inner product on the Hilbert space $L_2(\mu_{\theta_a})$.
    Then, the pseudo spectral gap of arm $a$ is defined as 
    \begin{align}\label{eq:pseudo}
        \beta_{\theta_a}=\max_{k\ge1}\,\{\beta'((P'_{\theta_a})^kP^k_{\theta_a})/k\}, 
    \end{align}
    where $\beta'(\cdot)$ denotes the spectral gap as defined in~\cite{lawler1988bounds}.
\end{definition}

To establish uniform concentration guarantees across all $K$ arms, we introduce the global pseudo spectral gap $\beta^{\min}\triangleq\min_{a\in[K]} \beta_{\theta_a}$, which characterizes the worst-case mixing rate among all Markovian arms under the restless bandit setting.
For quantifying initialization bias, we let $q_{\theta_a}$ denote the initial distribution of arm $a$ and define the vector $\bs{\omega}_a\triangleq(\frac{q_{\theta_a}(s)}{\mu_{\theta_a}(s)}, s\in\mathcal{S})$ to measure its deviation from stationarity.
Using the Minkowski inequality and using a global lower bound on the stationary probabilities $\mu^{\min}=\min_{a\in[K]}\min_{s\in\mathcal{S}}~\mu_{\theta_a}(s)$, we bound the deviation as
\begin{align}\label{eq:N_q}
    \Big\Vert\bs{\omega}_a\Big\Vert_2
    \le \sum\limits_{s\in\mathcal{S}}\Big\vert\frac{q_{\theta_a}(s)}{\mu_{\theta_a}(s)}\Big\vert\le\frac{1}{\mu^{\min}}.
\end{align}

Assuming $\mu^{\min}\ge \bar{\mu}$ for some constant $\bar{\mu}>0$, we provide a uniform bound on initialization error, which enables us to state the following concentration bound motivated by \cite[Theorem~3.3]{lezaud1998chernoff}.
\begin{proposition}\label{proposition:concentration}
    Given a constant $c\ge96(\bar{f}^2-\underline{f}^2)^2/\beta^{\min}$ and $K$ arms, for any $0<\delta\leq1$, it holds that
    \begin{align}\label{eq:concentration_bound}
        \mathbb{P}\left[\left|\hat{g}_{\theta_a}(\gamma)\!-\!g_{\theta_a}(\gamma)\right|\!<\!\sqrt{\frac{c\log{(4j_bK/(\delta\bar{\mu}))}}{T^a_{j_b}}}\right]\!\ge\!1\!-\!\frac{\delta}{K}.
    \end{align}
\end{proposition}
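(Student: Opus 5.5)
The plan is to write $\hat g_{\theta_a}(\gamma)-g_{\theta_a}(\gamma)$ as the centered empirical average of a single additive functional of a Markov chain, and then apply a Chernoff-type bound for Markov chains in the form of \cite[Theorem~3.3]{lezaud1998chernoff}, with the spectral gap replaced by the pseudo spectral gap $\beta_{\theta_a}$ of \cite{paulin2015concentration}.

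\textbf{Step 1: reduce to one functional.} Consider the pair chain $Z_{a,i}=(X_{a,i},X_{a,i+1})$ on $\mathcal S^2$. Its stationary law is $\mu_{\theta_a}(s)P_{\theta_a}(s'|s)$, and its mixing is inherited from $P_{\theta_a}$, so its pseudo spectral gap is comparable to $\beta_{\theta_a}$. Define
\begin{align*}
h_\gamma(s,s')=\tfrac12 f^2(s)-\gamma f(s)+f(s)f(s').
\end{align*}
By \eqref{eq:estimator_eta}--\eqref{eq:empirical age}, $\hat g_{\theta_a}(\gamma)=\frac{1}{T}\sum_{i=1}^{T}h_\gamma(Z_{a,i})$ with $T=T^a_{j_b}$. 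By \eqref{eq:mean}--\eqref{eq:covariance}, $g_{\theta_a}(\gamma)=\mathbb E_{\mu}[h_\gamma]$.

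Because Step~2 samples of each block begin and end at the regenerative state $\tilde X_a$, the concatenated Step~2 samples of arm $a$ form a legitimate trajectory of the chain. Restless evolution during unobserved periods only resets the start of each block to the fixed regenerative state, so no bias enters. The initial law of this trajectory has density ratio $\bs\omega_a$, and \eqref{eq:N_q} bounds its norm by $1/\bar\mu$.

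\textbf{Step 2: Lezaud/Paulin bound.} For a bounded centered function with range $R$ and variance proxy at most $R^2$, the bound reads
\begin{align*}
\mathbb P\big(|\hat g-g|\ge\epsilon\big)\le 2\Vert\bs\omega_a\Vert_2\exp\!\Big(-\frac{T\epsilon^2\beta_{\theta_a}}{c_0R^2}\Big)
\end{align*}
for an absolute constant $c_0$ (of order $12$ in Lezaud's constants). The range of $h_\gamma$ is controlled by $\bar f^2-\underline f^2$, up to the $\gamma$-dependent linear term. I would absorb that term by restricting $\gamma$ to $[\underline f,\bar f]$, which holds for the Dinkelbach iterates. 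This gives $R\le 2(\bar f^2-\underline f^2)$, and together with the factor-two loss from the pair chain it produces the constant $96(\bar f^2-\underline f^2)^2/\beta^{\min}$. Using $\beta_{\theta_a}\ge\beta^{\min}$ and $\Vert\bs\omega_a\Vert_2\le1/\bar\mu$, set
\begin{align*}
\epsilon=\sqrt{\frac{c\log(4j_bK/(\delta\bar\mu))}{T}}.
\end{align*}
The tail probability is then at most $\frac{2}{\bar\mu}\cdot\frac{\delta\bar\mu}{4j_bK}\le\frac{\delta}{2j_bK}\le\frac{\delta}{K}$, which is the claim. The extra $j_b$ in the logarithm leaves room for a union bound over the random sample count $T\le j_b$. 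I would handle the fact that $T^a_{j_b}$ is random by applying the bound at each fixed $t\le j_b$ and taking the union, which the $j_b$ factor pays for.

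\textbf{Main obstacle.} The hardest part is justifying the transfer from $P_{\theta_a}$ to the pair chain, or handling the product term $Y_{a,i}Y_{a,i+1}$ directly. The pair chain is not reversible, which is exactly why I use the pseudo spectral gap. I would first try to show $\beta_{\text{pair}}\ge\beta_{\theta_a}/2$ via $k\to k+1$ shifts in \eqref{eq:pseudo}. Failing that, I would split the sum into even and odd indices, bound $\hat\eta$ and $\hat\Gamma$ separately at level $\delta/(2K)$ each, and pay the resulting constant inside $c$.
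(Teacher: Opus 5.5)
Your argument is correct in outline, but it takes a different route from the paper.

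\textbf{How the paper proceeds.} The paper never forms a single functional. It normalizes the two pieces separately: $\tilde f_{\theta_a}$, a function of one state, and $\hat f_{\theta_a}$, a function of a consecutive pair. It applies the Lezaud-type bound with constant $24$ to each at accuracy $\epsilon/2$ and adds the two tails. This is where the prefactor $4\Vert\bs\omega_a\Vert_2$ (hence the $4$ in the logarithm) and $96=4\cdot 24$ come from. The union over $j_bK$ and the substitutions $\Vert\bs\omega_a\Vert_2\le 1/\bar\mu$, $\beta_{\theta_a}\ge\beta^{\min}$ then finish exactly as in your last step. Your fallback is close to this proof. The paper does no even/odd split; it simply applies the single-state bound to $\hat f_{\theta_a}$ as if $Y_{a,i}Y_{a,i+1}$ were an additive functional of $P_{\theta_a}$.

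\textbf{What your route buys.} You get a single tail, and an actual justification for that pair step, which the paper skips. Your transfer lemma is true. Write $P_{\mathrm{pair}}$ for the pair kernel, $\iota$ for the lift of functions of the first coordinate, and $\Pi$ for integrating out the second coordinate with $P_{\theta_a}$. Then $(P'_{\mathrm{pair}})^kP_{\mathrm{pair}}^k=\iota\,(P'_{\theta_a})^{k-1}P_{\theta_a}^{k-1}\,\Pi$ with $\Pi\iota=I$. So the two operators share their nonzero spectrum, and $\beta_{\mathrm{pair}}=\max_{m\ge1}\beta'\big((P'_{\theta_a})^mP^m_{\theta_a}\big)/(m+1)\ge\beta_{\theta_a}/2$. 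The pair chain started from $\tilde X_a$ also has the same $\Vert\bs\omega_a\Vert_2$. Your excursion (strong Markov) remark about the restless gaps is another point the paper omits. On the randomness of $T^a_{j_b}$, you and the paper do the same thing: a union over $j_b$ treated as deterministic. Neither is tighter there.

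\textbf{Two details need fixing.}

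First, the Dinkelbach values do not lie in $[\underline f,\bar f]$. Under zero-wait, $\gamma^*\ge\frac12\mathbb E[Y]+\underline f\ge\frac32\underline f$, and $\gamma^*$ approaches $\frac32\bar f>\bar f$ when the best arm's delay concentrates at $\bar f$. Your range bound $R\le 2(\bar f^2-\underline f^2)$ survives if you restrict to $\gamma\in[0,\frac32\bar f]$ instead: for $f\ge 0$, the $\eta$-part then has range at most $\bar f^2-\underline f^2$. Some such restriction is needed in any route. The paper's normalizer $(\frac12\bar f^2-\gamma\bar f)-(\frac12\underline f^2-\gamma\underline f)$ is not the range for $\gamma\in(\underline f,\bar f)$, and it vanishes at $\gamma=\frac12(\bar f+\underline f)$.

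Second, landing on $96$ in your route depends on taking $c_0=12$, which is Lezaud's reversible-case constant, and the pair chain is not reversible. With the constant $24$ that the paper itself uses per normalized functional, $R\le 2(\bar f^2-\underline f^2)$ and the factor $2$ from $\beta_{\mathrm{pair}}$ give $192(\bar f^2-\underline f^2)^2/\beta^{\min}$, not $96$. This is bookkeeping: the paper obtains $96$ only because it skips the pair-chain factor for $\hat\Gamma$. Still, you should not claim that the stated threshold follows from your route without a sharper constant or a sharper range bound.
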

\begin{IEEEproof}
    See details in Appendix \ref{proof:concentration}.
\end{IEEEproof}

\begin{algorithm}[t]\label{alg:age-lucb}
\caption{Age-Aware LUCB}
\KwIn{Dinkelbach’s variable $\gamma$, confidence level $\delta$\;}
Initialize $b=1$, $i=0$, $i_2(b)=0$, $T^a_2(i_2(b))=0$, $\hat{\eta}_a=0$, $\hat{\Gamma}_a=0$ for all $a\in[K]$\;
\For{$b\le K$}{
    Play arm $b$ and obtain $\hat{\eta}_b^b(\gamma)$, $\hat{\Gamma}_b^b$ according to lines \ref{alg:rca_2}-\ref{alg:rca_end} of Algorithm~\ref{alg:rca}\;
}
\While{true}{
\For{$a=1,\ldots,K$}{
    Compute $\mathrm{LCB}_a(b)$ according to \eqref{eq:lcb_confidence}\label{alg:start_lucb}\;
}
Set $h(b)=\argmin_{a\in[K]}~\hat{\eta}_a^b(\gamma)+\hat{\Gamma}_a^b$ \label{alg:high_lucb}\;
Set $l(b)=\argmin_{a\neq h(b)} \mathrm{LCB}_a(b)$\label{alg:low_lucb}\;
Compute $\mathrm{UCB}_{h(b)}$ according to \eqref{eq:ucb_confidence}\label{alg:end_lucb}\;
\If{$\mathrm{UCB}_{h(b)}>\mathrm{LCB}_{l(b)}$\label{alg:condition}}{
\For{$a'\in\{h(b),l(b)\}$}{
    Play arm $a'$ and obtain $\hat{\eta}_{a'}^b(\gamma)$, $\hat{\Gamma}_{a'}^b$ according to Algorithm~\ref{alg:rca}\;
}
\Else{\Return{$a^*(\gamma)=h(b)$}}
}
}
\end{algorithm}

Following the concentration bounds established in Proposition~\ref{proposition:concentration}, we develop the age-aware LUCB selection scheme that integrates the Markovian regeneration sampling to enable reliable comparisons among arms with temporally correlated observations.

Different from traditional LUCB methods that assume i.i.d. samples and update confidence intervals after each pull, our approach introduces a block-based update mechanism that ensures statistical validity under Markovian dynamics.
Specifically, at each block index $b$, the algorithm computes the empirical estimates $\hat{\eta}_a^b(\gamma)$ and $\hat{\Gamma}_a^b$ for each arm $a$
and construct the lower and upper confidence bounds as follows:
\begin{subequations}\label{eq:confidence_interval}
\begin{align}
    \mathrm{LCB}_a(b)&=\hat{\eta}_a^b(\gamma)+\hat{\Gamma}_a^b-\mathrm{rad}_a(b),\label{eq:lcb_confidence} \\
    \mathrm{UCB}_a(b)&=\hat{\eta}_a^b(\gamma)+\hat{\Gamma}_a^b+\mathrm{rad}_a(b),\label{eq:ucb_confidence}
\end{align}
\end{subequations}
where $\mathrm{rad}_a(b)$ is the confidence radius given as
\begin{align}\label{eq:rad}
    \mathrm{rad}_a(b)=\sqrt{\frac{c\log{(4j_bK/(\delta\bar{\mu}))}}{T^a_{j_b}}}.
\end{align}

In each round $b$, as detailed in Lines \ref{alg:high_lucb} to \ref{alg:low_lucb} of Algorithm~\ref{alg:age-lucb}, the age-aware LUCB algorithm identifies two candidate arms: the current best arm $h(b)$, which minimizes the empirical cost estimate; and the challenger $l(b)$, which has the smallest lower confidence bound among the remaining arms.
These two arms are selected for further exploration to reduce uncertainty and tighten the confidence intervals.
The algorithm terminates when the upper confidence bound of the best arm falls below the lower confidence bound of its challenger, guaranteeing identification of the best arm with high confidence.

\subsection{Dinkelbach Update}
Leveraging Lemma~\ref{Lemma_Dinkel} and the monotonicity property that $J(\gamma)$ decreases in $\gamma$, we adopt the bisection method to search for the optimal Dinkelbach variable $\gamma^*$ such that $J(\gamma^*)=0$.
Following Problem \eqref{eq:stationary_problem}, when the algorithm
converges, the minimum average age achieved by the best arm is $\gamma^*$.

\section{Upper Bound on the Sample Complexity}
In this section, we present an upper bound on the sample complexity of the proposed age-optimal BAI strategy and further explore how key arm-dependent parameters affect the algorithm efficiency.

While Proposition~\ref{proposition:concentration} provides concentration guarantees for individual arm estimates, deriving a valid stopping rule requires confidence intervals to simultaneously hold across all arms and sampling blocks.
To this end, we establish the global high-probability \emph{good} event in the following lemma.
\begin{lemma}\label{lemma:event}
    Define the good event $\mathcal{E}$ as
    \begin{align}
        \mathcal{E}\triangleq\bigcap_{a\in[K]} \bigcap_{b\in\mathbb{N}}\left\{\left\vert \hat{\eta}_{a}^b(\gamma)+\hat{\Gamma}_{a}^b-g_{\theta_a}(\gamma) \right\vert \le \mathrm{rad}_a(b)\right\},
    \end{align}
it holds with high probability at least $1-\delta$, i.e., $\mathbb{P}\{\mathcal{E}\}\ge 1-\delta$.
\end{lemma}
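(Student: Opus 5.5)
The plan is to take the complement of $\mathcal{E}$, bound it with a union bound over arms and blocks, and control each term with the concentration inequality underlying Proposition~\ref{proposition:concentration}. First, $\mathcal{E}^c$ is the union over $a\in[K]$ and $b\in\mathbb{N}$ of the bad events
\begin{align*}
\mathcal{B}_{a,b}\triangleq\left\{\left\vert \hat{\eta}_{a}^b(\gamma)+\hat{\Gamma}_{a}^b-g_{\theta_a}(\gamma)\right\vert>\mathrm{rad}_a(b)\right\}.
\end{align*}
Hence $\mathbb{P}\{\mathcal{E}^c\}\le\sum_{a}\sum_{b}\mathbb{P}\{\mathcal{B}_{a,b}\}$.

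Proposition~\ref{proposition:concentration} only gives $\mathbb{P}\{\mathcal{B}_{a,b}\}\le\delta/K$ for a fixed block. Summing that over infinitely many $b$ diverges, so I cannot use it as a black box.

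Instead I would go back into its proof, which is the Lezaud/Paulin Chernoff bound for Markov chains. Let $N\triangleq T^a_{j_b}$ and $\epsilon=\mathrm{rad}_a(b)$, so that $N\epsilon^2=c\log(4j_bK/(\delta\bar\mu))$. That proof should give a bound of the form
\begin{align*}
\mathbb{P}\{\mathcal{B}_{a,b}\}\le 2\,\Vert\bs{\omega}_a\Vert_2\exp\!\Big(-\frac{N\epsilon^2\beta^{\min}}{C(\bar f^2-\underline f^2)^2}\Big).
\end{align*}
Here the summand $\tfrac12 Y^2-\gamma Y+YY'$ is bounded, so its range is controlled by $(\bar f^2-\underline f^2)$; the pair-chain $(X_i,X_{i+1})$ is used for $\hat\Gamma$; and the factor $2$ accounts for splitting the deviation between $\hat\eta$ and $\hat\Gamma$. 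With $\Vert\bs\omega_a\Vert_2\le1/\bar\mu$ from \eqref{eq:N_q} and $c\ge96(\bar f^2-\underline f^2)^2/\beta^{\min}$, the exponent is at least $2\log(4j_bK/(\delta\bar\mu))$ rather than just $\log$. This yields $\mathbb{P}\{\mathcal{B}_{a,b}\}\le \tfrac{2}{\bar\mu}\big(\tfrac{\delta\bar\mu}{4j_bK}\big)^{2}$. The extra power in $j_b$ is what makes the sum over blocks converge; in a proof that closes, this is exactly why the constant $96$ is chosen so large.

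Next I use $j_b\ge b$, since each block contributes at least one Step-2 sample. This gives
\begin{align*}
\mathbb{P}\{\mathcal{E}^c\}\le\sum_{a=1}^K\sum_{b\ge1}\frac{2}{\bar\mu}\frac{\delta^2\bar\mu^2}{16b^2K^2}=\frac{\delta^2\bar\mu}{8K}\cdot\frac{\pi^2}{6}\le\delta.
\end{align*}
The last inequality uses $\delta\le1$, $\bar\mu\le1$ and $K\ge1$.

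The main obstacle is that $T^a_{j_b}$ and $j_b$ are random stopping-type indices chosen adaptively by the LUCB rule, not fixed sample sizes. I would handle this as follows:
\begin{itemize}
\item Use the restless independence across arms together with the regenerative structure, which makes the Step-2 samples of arm $a$ a contiguous-excursion subsequence of its own chain.
\item Index the bad events by the deterministic count $N=T^a_{j_b}$ rather than by $b$. Because $j_b\ge N$, the radius at count $N$ is at least $\sqrt{c\log(4NK/(\delta\bar\mu))/N}$.
\item Union-bound over $N\in\mathbb{N}$, which gives the same $\sum_N N^{-2}$ convergence without any optional-sampling issues.
\end{itemize}
A remaining subtlety is that the Step-2 samples of arm $a$ do not start from stationarity. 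This is absorbed by the $\Vert\bs\omega_a\Vert_2$ factor, taking the worst case over the regenerative initial state, whose ratio to $\mu_{\theta_a}$ is still at most $1/\bar\mu$.
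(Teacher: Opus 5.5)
You are right that Proposition~\ref{proposition:concentration} cannot be summed as a black box over infinitely many blocks. The paper's own proof simply asserts the per-arm bound $\delta/K$ ``for all blocks'', relying on the allocation $\delta/(j_bK)$ in Appendix~\ref{proof:concentration}, which is really a fixed-horizon union bound. Your reindexing by the count $N=T^a_{j_b}$ (using $j_b\ge N$) and your use of the regenerative excursion structure handle adaptivity more carefully than the paper does.

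However, the step that makes your series converge is false with the stated constant: the concentration bound does not give an exponent of $2\log(4j_bK/(\delta\bar\mu))$. In Appendix~\ref{proof:concentration}, each Lezaud bound carries the constant $24$. Splitting $\epsilon$ into $\epsilon/2$ turns this into $96$, and the prefactor is $4\Vert\bs{\omega}_a\Vert_2$. So with $\epsilon=\mathrm{rad}_a(b)$,
\begin{align*}
\mathbb{P}\{\mathcal{B}_{a,b}\}\le 4\Vert\bs{\omega}_a\Vert_2\exp\Big(-\frac{c\,\beta_{\theta_a}}{96(\bar f^2-\underline f^2)^2}\log\frac{4j_bK}{\delta\bar\mu}\Big)\le\frac{4}{\bar\mu}\cdot\frac{\delta\bar\mu}{4j_bK}=\frac{\delta}{j_bK}.
\end{align*}
The exponent equals exactly one $\log$ when $c=96(\bar f^2-\underline f^2)^2/\beta^{\min}$ and $\beta_{\theta_a}=\beta^{\min}$. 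The constant $96$ is entirely consumed in producing one power, not two. Your unspecified $C$ would have to be at most $48$, and nothing in the argument provides that. Applying the Markov Chernoff bound directly to the combined summand $\tfrac12Y_i^2-\gamma Y_i+Y_iY_{i+1}$ does not help either: it doubles the range to $2(\bar f^2-\underline f^2)$ and lands on the same $96$. After your reindexing, the per-count failure is therefore only $\delta/(NK)$. Since $\sum_{N\ge1}\delta/(NK)$ diverges, the union bound over $N$ (or over $b$) does not close.

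To repair the argument you have to change the constants in the statement. For example, requiring $c\ge192(\bar f^2-\underline f^2)^2/\beta^{\min}$ gives a per-count failure of at most $\frac{4}{\bar\mu}\big(\frac{\delta\bar\mu}{4NK}\big)^2=\frac{\delta^2\bar\mu}{4N^2K^2}$, and a total of at most $\frac{\pi^2\delta^2\bar\mu}{24K}\le\delta$. Alternatively, you could put an extra factor of $N$ inside the logarithm of $\mathrm{rad}_a(b)$, or replace the union bound by a time-uniform maximal (peeling) inequality. As written, with only $c\ge96(\bar f^2-\underline f^2)^2/\beta^{\min}$, your proof does not yield $\mathbb{P}\{\mathcal{E}\}\ge1-\delta$.

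A secondary soft spot, which the paper shares: for $\hat\Gamma$ you invoke the pair chain $(X_i,X_{i+1})$ but then use $\beta^{\min}$, the pseudo spectral gap of the single-state chains. The pair chain's gap is a different quantity and needs its own bound.
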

\begin{IEEEproof}
    According to \eqref{eq:confidence_interval}, we prove that $\mathbb{P}\{g_{\theta_a}(\gamma)<\mathrm{UCB}_a(b)<g_{\theta_a}(\gamma)+2\mathrm{rad}_a(b)\}\ge1-\frac{\delta}{K}$ and $\mathbb{P}\{\mathrm{LCB}_a(b)<g_{\theta_a}(\gamma)<\mathrm{LCB}_a(b)+2\mathrm{rad}_a(b)\}\ge1-\frac{\delta}{K}$ holds for all blocks. Applying a union bound on the failure probabilities over all arms completes the proof.
\end{IEEEproof}
Lemma~\ref{lemma:event} ensures that, with probability at least $1-\delta$, the confidence intervals contain the true means throughout the learning process, thereby ensuring the correctness of the stopping rule.

Without loss of generality, we assume that arm $1$ is the unique best arm achieving the minimum average AoI.
Additionally, we assume $g_{\theta_1}(\gamma^*)<g_{\theta_2}(\gamma^*)\le g_{\theta_3}(\gamma^*)\le\ldots\le g_{\theta_K}(\gamma^*)$.
We define the optimality gap of each arm $a$ as $\Lambda_{\theta_a}=g_{\theta_a}(\gamma^*)-g_{\theta_1}(\gamma^*)$ where $\Lambda_{\theta_a}>0$.

The key to bounding the total number of measurements is to determine when a suboptimal arm can be confidently eliminated from further exploration.
We introduce a midpoint threshold $\xi=\frac{1}{2}(g_{\theta_1}(\gamma^*)+g_{\theta_2}(\gamma^*))$  and declare an arm as \emph{active} in the age-optimal BAI algorithm if
\begin{align}
    \mathrm{UCB}_{h(b)}(b)>\xi\quad\mbox{or}\quad\mathrm{LCB}_{l(b)}(b)<\xi,
\end{align}
indicating that the empirical best arm $h(b)$ has not yet been confirmed as optimal or the challenger $l(b)$ cannot yet be ruled out. Note that the midpoint threshold $\xi$ is introduced purely as an analytical tool in the sample complexity analysis. It is not required or computed during the execution of our proposed Age-Aware LUCB algorithm so it is parameter free.
The Age-Aware LUCB  algorithm terminates only when no arm remains \emph{active}, i.e., the best arm is identified with high confidence and all suboptimal arms have been excluded.

Under the high-probability \emph{good} event $\mathcal{E}$ from Lemma \ref{lemma:event}, the following corollary characterizes an instance-dependent condition for eliminating suboptimal arms.
\begin{corollary}\label{coro:1/4}
    Any suboptimal arm $a$ is no longer active once the confidence radius satisfies $\mathrm{rad}_a(b)\le\Lambda_{\theta_a}/4$.
\end{corollary}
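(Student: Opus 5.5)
We work on the good event $\mathcal{E}$ of Lemma~\ref{lemma:event}, so that for every arm $a$ and every block $b$ we have $|\hat{\eta}_a^b(\gamma)+\hat{\Gamma}_a^b-g_{\theta_a}(\gamma)|\le \mathrm{rad}_a(b)$, evaluated at $\gamma=\gamma^*$. Under this event $\mathrm{LCB}_a(b)\ge g_{\theta_a}(\gamma^*)-2\,\mathrm{rad}_a(b)$ and $\mathrm{UCB}_a(b)\le g_{\theta_a}(\gamma^*)+2\,\mathrm{rad}_a(b)$. The plan is to show that, once $\mathrm{rad}_a(b)\le\Lambda_{\theta_a}/4$, arm $a$ can no longer satisfy either activity condition with respect to $\xi$. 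That is, it cannot have $\mathrm{LCB}_a(b)<\xi$ when playing the role of challenger $l(b)$, and it cannot have $\mathrm{UCB}_a(b)>\xi$ when it is the empirical best $h(b)$.

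For the challenger role, I would take a suboptimal arm $a\ge 2$ and chain the inequalities
\[
\mathrm{LCB}_a(b)\ge g_{\theta_a}(\gamma^*)-2\,\mathrm{rad}_a(b)\ge g_{\theta_a}(\gamma^*)-\tfrac{1}{2}\Lambda_{\theta_a}=\tfrac{1}{2}\bigl(g_{\theta_a}(\gamma^*)+g_{\theta_1}(\gamma^*)\bigr).
\]
Since $g_{\theta_a}(\gamma^*)\ge g_{\theta_2}(\gamma^*)$, the right-hand side is at least $\xi$. Hence $\mathrm{LCB}_a(b)\ge\xi$, so arm $a$ does not trigger the condition $\mathrm{LCB}_{l(b)}(b)<\xi$.

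For the empirical-best role, I would show that a suboptimal arm with such a small radius cannot keep the algorithm active through $\mathrm{UCB}_{h(b)}>\xi$. The value $\hat{g}_a=\hat{\eta}^b_a+\hat{\Gamma}^b_a$ is at least $g_{\theta_a}(\gamma^*)-\Lambda_{\theta_a}/4$, which is strictly above $\xi$. If $a=h(b)$, then $\hat g_1\ge\hat g_a>\xi$, which forces $g_{\theta_1}(\gamma^*)+\mathrm{rad}_1(b)>\xi$. In that case arm $1$'s own radius is still larger than $\Lambda_{\theta_2}/2$, so activity is attributable to arm $1$ rather than to arm $a$.

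The main obstacle is the accounting convention. ``Active'' is defined through the pair $(h(b),l(b))$ rather than per arm, so the statement has to be read as ``arm $a$ no longer contributes to activity.'' The constant $1/4$ comes from combining the factor $2$ (the distance from the estimate to a confidence bound, plus the estimation error) with the factor $1/2$ (the distance from the mean to the midpoint). I would state this reading explicitly and then conclude by the two cases above.
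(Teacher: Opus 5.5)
Your challenger-case chain is exactly the paper's proof. The paper derives $g_{\theta_a}(\gamma^*)-\xi\ge\Lambda_{\theta_a}/2$ from $g_{\theta_2}(\gamma^*)\le g_{\theta_a}(\gamma^*)$ and combines it with $\mathrm{LCB}_a(b)\ge g_{\theta_a}(\gamma^*)-2\,\mathrm{rad}_a(b)$ on $\mathcal{E}$ to get $\mathrm{LCB}_a(b)\ge\xi$. It reads ``a suboptimal arm is active'' simply as $\mathrm{LCB}_a(b)<\xi$ and never treats the $h(b)$ role. Your first case therefore already proves the statement as the paper means it. Your second case is extra, and as written it contains an error you should fix or drop.

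Your plan says a suboptimal $h(b)$ with small radius ``cannot have $\mathrm{UCB}_a(b)>\xi$''. That is false: on $\mathcal{E}$ every suboptimal arm has $\mathrm{UCB}_a(b)\ge g_{\theta_a}(\gamma^*)>\xi$. So a suboptimal $h(b)$ always triggers $\mathrm{UCB}_{h(b)}(b)>\xi$, whatever its radius.

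Your bound $\mathrm{rad}_1(b)>\Lambda_{\theta_2}/2$ is correct, but attributing the activity to arm $1$ is the wrong bookkeeping, since arm $1$ need not be sampled in that block. The useful statement is about $l(b)$. Since $h(b)=a\neq 1$,
\[
\mathrm{LCB}_{l(b)}(b)\le\mathrm{LCB}_1(b)\le g_{\theta_1}(\gamma^*)<\xi<\hat g_a\le\hat g_{l(b)}\le\mathrm{UCB}_{l(b)}(b).
\]
Hence the interval of $l(b)$ strictly contains $\xi$, and on $\mathcal{E}$ this gives $2\,\mathrm{rad}_{l(b)}(b)>|g_{\theta_{l(b)}}(\gamma^*)-\xi|$. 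This form is what is needed when the corollary is used to count blocks in Theorem~\ref{theorem:upper_bound}: one of the two arms actually sampled still has a large radius.
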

\begin{IEEEproof}
    See details in Appendix \ref{proof:coro_1/4}.
\end{IEEEproof}
Corollary \ref{coro:1/4} indicates that each suboptimal arm is eliminated once its confidence interval becomes sufficiently tight.  
Note that under the Markovian regeneration sampling strategy, the total sample count for each arm $a$ in block $b$ includes initialization, regeneration and one additional block termination sample.
To quantify the initialization overhead, we adopt the notion of
the expected hitting time $\Psi_{\theta_a}^{s,s'}$ \cite{feller1991introduction} that denotes the expected steps to reach state $s'$ from state $s$ under the transition probability matrix $P_{\theta_a}$ of arm $a$.
For the time-homogeneous Markov chain considered in this paper, $\Psi_{\theta_a}^{s,s'}$ satisfies
\begin{align}
    \Psi_{\theta_a}^{s,s'}=1+\sum_{s''\in\mathcal{S}}P_{\theta_a}(s''|s)\Psi_{\theta_a}^{s'',s'}.
\end{align}

Since each Markovian regeneration block returns to a designated state drawn from the stationary distribution $\mu_{\theta_a}$, the expected regeneration length is lower bounded by the inverse of the minimal stationary probability $\mu^{\min}_{\theta_a}=\min_{s\in\mathcal{S}}\mu_{\theta_a}(s)$.
We define the worst-case mean hitting time as $\Psi_{\theta_a}^{\max}=\max_{s,s'}\Psi_{\theta_a}^{s,s'}$.
Leveraging the elimination condition in Corollary \ref{coro:1/4},
we provide an instance-dependent upper bound on the expected sample complexity in terms of the required number of blocks and per-block sampling cost.
\begin{theorem}[Sample Complexity]\label{theorem:upper_bound}
    Given 
    the confidence parameter $\delta$ and with probability at least $1-\delta$, 
    the expected stopping time $\mathbb{E}[\tau_\delta]$ satisfies
    \begin{align}\label{eq:upper}
        \mathbb{E}[\tau_\delta]=\mathcal{O}\bigg(\sum_{a: \Lambda_{\theta_a}>0}\frac{\frac{1}{\mu^{\min}_{\theta_a}}+\Psi^{\max}_{\theta_a}}{\Lambda^2_{\theta_a}}\log{\Big(\frac{1}{\delta\bar{\mu}}\Big)}\bigg)\quad\mbox{as}\quad\delta\to0.
    \end{align}
\end{theorem}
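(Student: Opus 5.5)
The plan is to work on the good event $\mathcal{E}$ of Lemma~\ref{lemma:event}, which has probability at least $1-\delta$. On $\mathcal{E}$ we first bound the number of regeneration (Step~2) samples each suboptimal arm needs before it stops being active. We then multiply that count by the expected per-block sampling cost.

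\textbf{Step 1 (required Step-2 samples).} By Corollary~\ref{coro:1/4}, arm $a$ with $\Lambda_{\theta_a}>0$ is no longer active once $\mathrm{rad}_a(b)\le\Lambda_{\theta_a}/4$. By \eqref{eq:rad} this holds as soon as
\[
T^a_{j_b}\ge \frac{16c\log\big(4j_bK/(\delta\bar{\mu})\big)}{\Lambda_{\theta_a}^2}.
\]
For the best arm $1$ we use $\Lambda_{\theta_2}$ in place of the gap; this is the standard LUCB convention and adds at most a constant factor. The right-hand side contains $j_b$, so the bound is implicit.
\begin{itemize}
\item I would first bound the total $j_b$ by $\sum_a T^a_{j_b}$ plus a constant.
\item I would then use the standard inversion lemma: $x\ge A\log(Bx)$ holds whenever $x\ge 2A\log(AB)$.
\end{itemize}
This gives $T^a=\mathcal{O}\big(\Lambda_{\theta_a}^{-2}\log(1/(\delta\bar\mu))\big)$ as $\delta\to0$. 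Here $c$ is treated as a constant depending on $\beta^{\min}$, and the $\log\log$ and $\log K$ terms are absorbed.

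\textbf{Step 2 (counting LUCB rounds).} At each round $b$ the algorithm samples both $h(b)$ and $l(b)$. The usual LUCB argument then applies on $\mathcal{E}$: if the stopping condition fails, at least one of $h(b),l(b)$ is active, i.e.\ its radius exceeds a quarter of its gap. Each active arm can be the one sampled only until its count reaches the threshold from Step~1. Hence the number of blocks in which arm $a$ is pulled is at most $N_a$, the number of blocks needed to accumulate that many Step-2 samples. Since each block contains at least one Step-2 sample, $N_a\le T^a$.

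\textbf{Step 3 (per-block cost).} A block of arm $a$ consists of three parts.
\begin{itemize}
\item Initialization, which hits the regenerative observation from an arbitrary state, since the arm kept evolving restlessly while unobserved. Its expected length is at most $\Psi^{\max}_{\theta_a}$.
\item The regeneration excursion, with expected length the mean return time $1/\mu_{\theta_a}(\tilde X_a)\le 1/\mu^{\min}_{\theta_a}$ by Kac's formula.
\item One termination sample.
\end{itemize}
To justify the arbitrary starting state, I would note that the hitting-time bound holds uniformly over starting states, because of the strong Markov property applied at the block start. The expected total cost is therefore bounded by a Wald-type identity: the number of blocks is a stopping-time-like quantity, and the per-block lengths given the start state are dominated by i.i.d.-bounded quantities. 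The result is
\[
\mathbb{E}[\tau_\delta]\lesssim \sum_a N_a\Big(\tfrac{1}{\mu^{\min}_{\theta_a}}+\Psi^{\max}_{\theta_a}+1\Big).
\]
Substituting Step~1 gives \eqref{eq:upper}.

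\textbf{Main obstacle.} The delicate step is the conversion between Step-2 sample counts and block counts in expectation. Block lengths are random and correlated with the estimates, and the stopping time is determined by those same estimates. I would handle this by bounding $\mathbb{E}[\tau_\delta]$ by a deterministic worst-case number of blocks. On $\mathcal{E}$, using $N_a\le T^a$ with deterministic $T^a$ from Step~1, the block count is deterministic. The expected block lengths can then be summed via optional stopping. The contribution of $\mathcal{E}^c$ is controlled as usual by the vanishing $\delta$, interpreting the statement ``with probability at least $1-\delta$'' as conditioning on $\mathcal{E}$.
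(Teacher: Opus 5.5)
\textbf{Verdict: genuine gap, in the LUCB accounting of Step~2.} The paper does not write out a proof of Theorem~\ref{theorem:upper_bound}. Its sketch matches your skeleton: the number of blocks forced by Corollary~\ref{coro:1/4}, times a per-block cost of initialization ($\Psi^{\max}_{\theta_a}$), regeneration excursion, and one termination sample. Your Kac bound $1/\mu_{\theta_a}(\tilde X_a)\le 1/\mu^{\min}_{\theta_a}$ runs in the correct direction for the excursion term.

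\textbf{The failing step.} The conclusion of your Step~2, that arm $a$ is played in at most $N_a$ blocks, is false. The age-aware LUCB plays \emph{both} $h(b)$ and $l(b)$ in every round, but the LUCB argument only guarantees that \emph{one} of them is active; the other is played regardless. The best arm is typically $h(b)$, so after it becomes inactive it is still played in every round in which a challenger is active. That is of order $\sum_a N_a$ blocks rather than $N_1$, and an inactive suboptimal arm can likewise keep being played as a challenger.

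\textbf{What the argument actually gives.} The number of rounds is at most $\sum_a N_a$ (charge each round to an active arm in it). A round charged to $a$ costs $C_a+C_{a'}$ in expectation, where $a'$ is its partner and $C_a\triangleq 1/\mu^{\min}_{\theta_a}+\Psi^{\max}_{\theta_a}+1$. Hence
\[
\mathbb{E}[\tau_\delta]\lesssim\sum_a N_a\Big(C_a+\max_{a'}C_{a'}\Big),
\]
not $\sum_a N_aC_a$. Because the $\mathcal{O}(\cdot)$ in \eqref{eq:upper} is asymptotic in $\delta$ with the instance fixed, $\max_{a'}C_{a'}/C_a$ is a constant, so the theorem as literally stated still follows. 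The per-arm weighting it displays does not follow from this route: each suboptimal arm carries only its own mixing cost, and the best arm's $C_1$ is absent. That weighting cannot be expected with a universal constant, since every round also pays for a block of the best arm. Your remark that the best arm ``adds at most a constant factor'' holds only in this instance-dependent sense. The paper's sketch makes the same per-arm multiplication, so you must supply this accounting yourself.

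\textbf{Repairs to your ``main obstacle'' paragraph.} First, the block count on $\mathcal{E}$ is not deterministic. The threshold contains $\log(4j_bK/(\delta\bar\mu))$, and $j_b=\sum_aT^a_{j_b}$ includes Step-2 samples of arms played while inactive and the unbounded overshoot of excursions. So $j_b$ is not bounded by the sum of thresholds plus a constant. What holds on $\mathcal{E}$ is $b\le H\log(4Kj_b/(\delta\bar\mu))+K$ for every round before stopping, with $H\triangleq\sum_a 16c/\Lambda^2_{\theta_a}$ (using $\Lambda_{\theta_2}$ for arm~$1$). A working fix:
\begin{enumerate}
\item Let $R'$ be the first round violating this inequality and $R$ the stopping round. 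Then $\tau_\delta\mathbf{1}_{\mathcal{E}}\le S_{R\wedge R'}$, where $S_r$ counts samples after $r$ rounds.
\item Apply optional stopping, using that conditional block means are at most $C_a$ from any start. Together with $j_b\le S_b$ and Jensen on the logarithm, this gives $x\le 2\max_aC_a\big(H\log(4Kx/(\delta\bar\mu))+K+1\big)$ for $x=\mathbb{E}[S_{R\wedge R'}]$.
\item Check $\mathbb{E}[R']<\infty$, then invert the inequality.
\end{enumerate}

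Second, nothing bounds $\tau_\delta$ on $\mathcal{E}^c$, so that contribution is not ``controlled by the vanishing $\delta$''. Under your conditional reading, use $\mathbb{E}[\tau_\delta\mid\mathcal{E}]\le\mathbb{E}[S_{R\wedge R'}]/(1-\delta)$. Do not apply Wald under the law conditioned on $\mathcal{E}$, where the blocks lose their regenerative structure.
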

We observe from Theorem~\ref{theorem:upper_bound} that the sample complexity scales logarithmically with $1/\delta$, indicating that higher confidence levels 
requires more samples. 
In addition, the $\mathcal{O}(\Lambda_{\theta_a}^{-2})$ dependence captures the difficulty of identifying the best arm, where smaller optimality gaps require greater exploration. 
Furthermore, the Markovian dynamics introduce an additional sampling overhead $\Psi^{\max}_{\theta_a}$ associated with the Markov mixing behavior, reflecting that slower mixing delays the confidence updates and increases sample cost.



\section{Lower Bound}
In this section, we investigate the fundamental limits of the proposed age-optimal BAI problem by deriving an information-theoretic lower bound on the sample complexity. 
Since the general expression of the lower bound is implicit, we further present a case study that yields an explicit form, which enables a direct comparison with the upper bound stated in Theorem \ref{theorem:upper_bound}.

\subsection{Information-Theoretic Lower Bound}
To characterize the inherent difficulty of our age-optimal BAI problem,
we derive an information-theoretic lower bound on the sample complexity that holds for any algorithm.
We define $\mathrm{ALT}(\bs{\theta})$ as the set of all alternative problem instances that represent distinct system configurations, in which the best edge node differs from that under $\bs{\theta}$.
Each $\bs{\theta}'\in\mathrm{ALT}(\bs{\theta})$ satisfies
$a^*(\bs{\theta}')\neq a^*(\bs{\theta})$.
In other words, $\bs{\theta}'\in\mathrm{ALT}(\bs{\theta})$ satisfies $g_{\theta'_{a^*(\bs{\theta}')}}<g_{\theta_{a^*(\bs{\theta})}}$.

The log-likelihood ratio of observations up to update $n$ under instances $\bs{\theta}$ versus $\bs{\theta}'$ and denoted by $L_{\bs{\theta},\bs{\theta}'}(n)$ is then given by
\begin{align}\label{eq:log-likeli}
    L_{\bs{\theta},\bs{\theta}'}(n)=\log\frac{P_{\bs{\theta}}(A_{1:n},Y_{1:n})}{P_{\bs{\theta}'}(A_{1:n},Y_{1:n})}.
\end{align}

Due to the restless dynamics, we introduce the notion of delay in pulling arm $d_a\ge1$ as in \cite{Karthik2023best}, which is defined as the update elapsed since the arm $a$ was last selected.
Let $d_a(n)$ and $s_a(n)$ denote the delay and last observed state of arm $a$ for update $n$. 
We denote the state space of this Markov Decision Process as $\mathbb{S}$.
The system global state is then given as $(\bs{s}(n),\bs{d}(n))\in\mathbb{S}$, where $\bs{s}(n)\triangleq(s_1(n),\ldots,s_K(n))$ and $\bs{d}(n)\triangleq(d_1(n),\ldots,d_K(n))$.
The transition probabilities under instance $\bs{\theta}$ follow
\begin{align}\label{eq:TPM_d}
    &\mathbb{P}_{\bs{\theta}}\left((\bs{s}(n+1),\bs{d}(n+1))=( \bs{s}',\bs{d}')\mid(\bs{s}(n),\bs{d}(n),A_n=a\right)\nn
    &\quad=
    \begin{cases}
    P_{\theta_a}^{d_a}(s_a'\mid s_a), & \text{if}\,d_a'=1,d_{a'}'=d_{a'}+1,\ \\
    &s_{a'}'=s_{a'}\ \forall a' \neq a, \\
    0, & \text{otherwise}.
    \end{cases}
\end{align}

Notably, $\mathbb{S}$ is an infinite state space. For further analysis, we reduce $\mathbb{S}$ to a finite state space by constraining the delay of each arm to a positive integer $D$, yielding the truncated state space $\mathbb{S}_D\subset\mathbb{S}$. 
Let $\mathbb{S}_{D,a}\subset\mathbb{S}_D$ denote the subset where $d_a=D$,
for $(\bs{s},\bs{d})\in\mathbb{S}_{D,a}$, the transition probabilities satisfying 
\begin{align}
    &\mathbb{P}_{\bs{\theta}}\left((\bs{s}(n+1),\bs{d}(n+1))=(\bs{s}',\bs{d}')\mid(\bs{s}(n),\bs{d}(n)),A_n=a\right)\nn
    &\quad=
    \begin{cases}
    P_{\theta_a}^{D}(s_a'\mid s_a), & \text{if}\,d_a'=1,d_{a'}'=d_{a'}+1,\ \\
    &s_{a'}'=s_{a'}\ \forall a' \neq a, \\
    0, & \text{otherwise}.
    \end{cases}
\end{align}
Combining the dynamics in \eqref{eq:TPM_d} for all $(\bs{s},\bs{d})\in\mathbb{S}_D \setminus \mathbb{S}_{D,a}$,
we define the transition kernel as
\begin{align}
    Q_{\bs{\theta},D}(\bs{s}',\bs{d}'|\bs{s},\bs{d},a)&\triangleq\mathbb{P}_{\bs{\theta}}((\bs{s}(n+1),\bs{d}(n+1))=(\bs{s}',\bs{d}')\nn
    &\qquad\mid(\bs{s}(n)),\bs{d}(n),A_n=a).
\end{align}

We assume each arm is selected once for the first $K$ updates to obtain the initial observations.
For subsequent updates $n>K$, we denote the state visitations of $(\bs{s},\bs{d})$ up to update $n$ of arm $a$ as $N((\bs{s},\bs{d}),a,n)$ and the
number of transitions from $(\bs{s},\bs{d})\in\mathbb{S}_D$ to $\bs{s}'$ by $N((\bs{s},\bs{d}),\bs{s}',a,n)$.
Following \eqref{eq:log-likeli}, the log-likelihood can be represented as
\begin{align}\label{eq:log-likeli-2}
    &L_{\bs{\theta},\bs{\theta}'}(n)=\sum_{a=1}^K \log \frac{q_{\theta_a}(Y_{a,1})}{q_{\theta'_a}(Y_{a,1})}\nn
    &+\sum_{a=1}^K \sum_{\bs{s}'\in\mathcal{S}}\sum\limits_{(\bs{s},\bs{d})\in\mathbb{S}_D}\!\!\!N((\bs{s},\bs{d}),\bs{s}',a,n)\log\frac{P^{d_a}_{\theta_a}(s_a'|s_a)}{P^{d_a}_{\theta'_a}(s_a'|s_a)}.
\end{align}
Applying a change-of-measure argument, we establish the condition for any algorithm to identify the best arm with confidence at least $1-\delta$, given by
\begin{align}\label{eq:change-of-measure}
    \inf_{\bs{\theta}'\in\mathrm{ALT}(\bs{\theta})}~\mathbb{E}_{\bs{\theta}}[L_{\bs{\theta},\bs{\theta'}}(n)]\ge \mathrm{KL}(\delta\Vert1-\delta), \forall n>K,
\end{align}
where $\mathrm{KL}(\delta\Vert1-\delta)$ denotes the KL divergence between the Bernoulli distributions with parameters $\delta$ and $1-\delta$.

To facilitate the derivation of a tight lower bound from \eqref{eq:change-of-measure}, we introduce a normalized occupation measure $\nu(\bs{s},\bs{d},a)$, which describes the steady-state distribution of arm selections after the initial exploration:
\begin{align}
    \nu(\bs{s},\bs{d},a)=\frac{\mathbb{E}_{\bs{\theta}}[N((\bs{s},\bs{d}),a,n),\tau]}{\mathbb{E}_{\bs{\theta}}[\tau-K]}.
\end{align}
Let $\Sigma_D(\bs{\theta})$ be the space of all probability mass functions $\nu$ and $Q_{\bs{\theta},D}(\cdot\mid\bs{s},\bs{d},a)\triangleq [Q_{\bs{\theta},D}(\bs{s}',\bs{d}'\mid\bs{s},\bs{d},a)]^\top$,
we characterize the information-theoretic lower bound for \eqref{eq:bai_sample} as follows.
\begin{theorem}[Lower Bound]\label{theorem:lower_bound}
    For any $\bs{\theta}\in\Theta^K$ and $D\ge1$, the expected stopping time of any algorithm that is $\delta$-PAC satisfies
    \begin{align}
        \mathbb{E}[\tau_\delta]=\Omega\left(\frac{1}{T_D(\bs{\theta})}\log\left(\frac{1}{\delta}\right)\right)\quad\mbox{as}\quad\delta\to0,
    \end{align}
    where $T_D(\bs{\theta})$ is given by
    \begin{align}
        T_D(\bs{\theta})=&\sup_{\nu\in\Sigma_D(\bs{\theta})}\inf_{\bs{\theta}'\in\mathrm{ALT}(\bs{\theta})}\sum_{(\bs{s},\bs{d})\in\mathbb{S}_D}\sum_{a=1}^{K}\nu(\bs{s},\bs{d}, a)\nn
        &\;\;\times\mathrm{KL}\left(Q_{\bs{\theta},D}(\cdot\mid\bs{s},\bs{d},a)\Vert Q_{\bs{\theta}',D}(\cdot\mid\bs{s},\bs{d},a)\right).
    \end{align}
\end{theorem}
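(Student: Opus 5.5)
We follow the standard change-of-measure route of Garivier--Kaufmann, adapted to restless arms as in \cite{Karthik2023best}. Fix a $\delta$-PAC policy with $\mathbb{E}_{\bs{\theta}}[\tau_\delta]<\infty$ (otherwise there is nothing to prove) and any $\bs{\theta}'\in\mathrm{ALT}(\bs{\theta})$.

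\textbf{Step 1 (transportation inequality).} Apply the data-processing inequality for KL divergence to the $\mathcal{F}_{\tau}$-measurable event $\{a_\tau=a^*(\bs{\theta},\gamma)\}$. Its probability is at least $1-\delta$ under $\bs{\theta}$ and at most $\delta$ under $\bs{\theta}'$. This gives $\mathbb{E}_{\bs{\theta}}[L_{\bs{\theta},\bs{\theta}'}(\tau)]\ge \mathrm{KL}(\delta\Vert 1-\delta)$, which is \eqref{eq:change-of-measure} evaluated at the stopping time.

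\textbf{Step 2 (Wald-type decomposition).} Expand $L_{\bs{\theta},\bs{\theta}'}(\tau)$ via \eqref{eq:log-likeli-2}. The initial terms are bounded by a constant independent of $\delta$. The transition terms form a sum over updates $n>K$ of one-step log-likelihood ratios of the controlled chain $(\bs{s}(n),\bs{d}(n))$. Conditioned on $\mathcal{F}_{n-1}$ and the action, each increment has expectation $\mathrm{KL}(Q_{\bs{\theta}}(\cdot\mid\bs{s},\bs{d},a)\Vert Q_{\bs{\theta}'}(\cdot\mid\bs{s},\bs{d},a))$. Optional stopping on the resulting martingale, using $\mathbb{E}[\tau]<\infty$ and bounded increments (finite $\mathcal{S}$ and positive transition probabilities by Lemma~\ref{lemma:P_theta}), then gives
\[
\mathbb{E}_{\bs{\theta}}[L(\tau)]=C+\sum_{(\bs{s},\bs{d}),a}\mathbb{E}_{\bs{\theta}}[N((\bs{s},\bs{d}),a,\tau)]\,\mathrm{KL}(Q_{\bs{\theta}}\Vert Q_{\bs{\theta}'}).
\]

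\textbf{Step 3 (truncation to $\mathbb{S}_D$).} This is the main obstacle, because the true global state space $\mathbb{S}$ has unbounded delays. Our first attempt is to compare, for each delay $d\ge D$, the $d$-step KL with the $D$-step KL: the data-processing inequality shows $\mathrm{KL}(P^d_{\theta_a}(\cdot|s)\Vert P^d_{\theta'_a}(\cdot|s))$ is controlled by the contribution already counted at level $D$. If this comparison fails, the fallback is to lump all $d\ge D$ into $\mathbb{S}_{D,a}$ and use the kernel $Q_{\bs{\theta},D}$ defined there, and then argue that the truncated sum still upper-bounds the information, up to a factor absorbed in $\Omega(\cdot)$. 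Either way the expected information becomes $\mathbb{E}[\tau-K]\sum\nu\,\mathrm{KL}(Q_{\bs{\theta},D}\Vert Q_{\bs{\theta}',D})$, with $\nu\in\Sigma_D(\bs{\theta})$ the normalized occupation measure of the policy.

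\textbf{Step 4 (sup--inf and conclusion).} Since Step 1 holds for every $\bs{\theta}'$, take the infimum over $\mathrm{ALT}(\bs{\theta})$, then bound by the supremum over $\nu\in\Sigma_D(\bs{\theta})$. This yields
\[
\mathbb{E}[\tau_\delta]-K\ge \frac{\mathrm{KL}(\delta\Vert1-\delta)-C}{T_D(\bs{\theta})}.
\]
Finally use $\mathrm{KL}(\delta\Vert1-\delta)\ge\log(1/(2.4\delta))$ and let $\delta\to0$ to conclude $\mathbb{E}[\tau_\delta]=\Omega(T_D(\bs{\theta})^{-1}\log(1/\delta))$.
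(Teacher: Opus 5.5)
Your Steps 1, 2 and 4 follow the same change-of-measure and Wald-identity route the paper sketches around \eqref{eq:log-likeli-2}--\eqref{eq:change-of-measure}. Step 3, however, is a genuine gap, and neither of your two options closes it.

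The first option misuses data processing. Write $P^{d}_{\theta_a}(\cdot\mid s)=P^{D}_{\theta_a}(\cdot\mid s)\,P^{d-D}_{\theta_a}$. The post-processing kernel $P^{d-D}_{\theta_a}$ is itself different under $\theta'_a$, so data processing does not apply. The chain rule only gives
\[
\begin{aligned}
&\mathrm{KL}\bigl(P^{d}_{\theta_a}(\cdot\mid s)\Vert P^{d}_{\theta'_a}(\cdot\mid s)\bigr)\le \mathrm{KL}\bigl(P^{D}_{\theta_a}(\cdot\mid s)\Vert P^{D}_{\theta'_a}(\cdot\mid s)\bigr)\\
&\quad+\mathbb{E}_{X\sim P^{D}_{\theta_a}(\cdot\mid s)}\Bigl[\mathrm{KL}\bigl(P^{d-D}_{\theta_a}(\cdot\mid X)\Vert P^{d-D}_{\theta'_a}(\cdot\mid X)\bigr)\Bigr],
\end{aligned}
\]
that is, the $D$-step term \emph{plus} an extra nonnegative term that the level-$D$ contribution does not control. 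There is no general inequality bounding the $d$-step divergence by the $D$-step one. For aperiodic arms, as $d\to\infty$ the former tends to $\mathrm{KL}(\mu_{\theta_a}\Vert\mu_{\theta'_a})$, which has nothing to do with the $D$-step divergence out of a particular state $s$.

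The fallback fails for the same reason. The kernel $Q_{\bs\theta,D}$ on $\mathbb{S}_{D,a}$ uses $P^{D}$, whereas an unconstrained policy that waits $d>D$ steps faces $P^{d}$ as in \eqref{eq:TPM_d}. The lumped sum is therefore not $\mathbb{E}_{\bs\theta}[L_{\bs\theta,\bs\theta'}(\tau)]$, and relating the two needs exactly the uniform KL comparison that is unavailable. Moreover, ``a factor absorbed in $\Omega(\cdot)$'' is incompatible with the exact inequality you display in Step 4. If that factor may depend on $\bs\theta$, the conclusion degenerates to $\mathbb{E}[\tau_\delta]=\Omega(\log(1/\delta))$, since $1/T_D(\bs\theta)$ is then just a constant.

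The missing idea is that the truncation restricts the class of policies; it is not an approximation of the likelihood. The paper reaches $\mathbb{S}_D$ by ``constraining the delay of each arm'' to $D$. Its case study states explicitly that admissible strategies must resample each arm before its delay exceeds the limit. For such $D$-delay-constrained $\delta$-PAC policies:
\begin{itemize}
\item the controlled chain $(\bs s(n),\bs d(n))$ never leaves the finite set $\mathbb{S}_D$;
\item its kernel is exactly $Q_{\bs\theta,D}$;
\item \eqref{eq:log-likeli-2} holds with no comparison at all.
\end{itemize}
Your Steps 2 and 4 then give $\mathbb{E}[\tau_\delta]-K\ge(\mathrm{KL}(\delta\Vert1-\delta)-C)/T_D(\bs\theta)$ with the exact constant, which is the intended content of the theorem. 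A minor point on Step 2: for bounded increments, cite \eqref{eq:P_theta}, which shows $P_{\theta}$ and $P_{\theta'}$ share the support of $P$. Lemma~\ref{lemma:P_theta} gives irreducibility and positive recurrence, not positive entries.

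For genuinely unconstrained policies you must work on the countable space $\mathbb{S}$ as in \cite{Karthik2023best}. That route produces a constant $T^*(\bs\theta)\ge T_D(\bs\theta)$, because every occupation measure on $\mathbb{S}_D$ is also admissible on $\mathbb{S}$. This is a weaker bound, so your Step 3 cannot deliver $1/T_D(\bs\theta)$ for arbitrary policies along these lines.
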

Theorem \ref{theorem:lower_bound} measures the fundamental difficulty of the age-optimal BAI problem across system configurations, where a smaller $T_D(\bs{\theta})$ indicates more similar egde node dynamics, thereby requiring more samples to achieve reliable identification.
\subsection{Case Study}
To provide further insight into the fundamental limits characterized by Theorem \ref{theorem:lower_bound}, we consider a concrete and analytically tractable two-arm restless bandit system and fix the delay constraint $D=2$.
Each arm evolves as a two-state Markov chain on the state space $\mathcal{S}=\{0,1\}$, with deterministic emission function $f(s)=s$. 
The transition dynamics of each arm is governed by the one-parameter exponential family defined in \eqref{eq:tilde_P}.
Specifically, we adopt a base transition matrix $P$ satisfying Assumption \ref{assump:P}, given by
\begin{align}
    P=\begin{pmatrix}
        1-p & p \\
        q& 1-q
    \end{pmatrix}, \quad p,q\in(0,1).
\end{align}
Following the normalization in \eqref{eq:P_theta}, the stationary probability of state $1$ for arm $a$ parameterized by $\theta_a$, is given by
\begin{align}\label{eq:mu(1)}
    \mu_{\theta_a}(1)=\frac{pe^{\theta_a}(q+(1-q)e^{\theta_a})}{pe^{\theta_a}(q+(1-q)e^{\theta_a})+q(1-p+pe^{\theta_a})}.
\end{align}

Assume that arm $1$ is the unique optimal arm under the true instance $\bs{\theta}=(\theta_1,\theta_2)$. 
Since a larger $\theta_a$ increases $\mu_{\theta_a}(1)$ according to \eqref{eq:mu(1)}, and hence leads to a larger average AoI, this implies $\theta_1<\theta_2$.
To derive an explicit lower bound, we construct an alternative instance $\bs{\theta}'=(\theta'_1, \theta'_2)$ such that $\theta'_1=\theta_1$ and $\theta'_2 = \theta_2-\varepsilon$, where $\varepsilon$ is chosen so that arm $1$ becomes suboptimal under $\boldsymbol{\theta}'$. 
Under this construction, following \eqref{eq:log-likeli-2}, the log-likelihood ratio simplifies to
\begin{align}
    \hspace{-0.16cm} 
    L_{\bs{\theta},\bs{\theta}'}(n)\!=\! \sum_{\bs{s}'\in\mathcal{S}}\sum\limits_{(\bs{s},\bs{d})\in\mathbb{S}_D}\!\!N((\bs{s},\bs{d}),\bs{s}',n)\log\frac{P^{d_2}_{\theta_2}(s'|s)}{P^{d_2}_{\theta_2-\varepsilon}(s'|s)},
\end{align}
since arm $1$ has identical dynamics under both instances.

Therefore, the difficulty of distinguishing $\boldsymbol{\theta}$ from $\boldsymbol{\theta}'$ is dominated by the KL divergence between the trajectory distributions induced by $\theta_2$ and $\theta_2-\varepsilon$.
For the exponential family, the KL divergence associated with a trajectory of length $d_2$ admits a second-order Taylor expansion in $\varepsilon$, given by
\begin{align}
    \mathrm{KL}(P^{d_2}_{\theta_2}(\cdot|s)\Vert P^{d_2}_{\theta_2-\varepsilon}(\cdot|s))=\frac{\varepsilon^2}{2}\mathcal{I}_{\theta_2}^{(d_2)}+o(\varepsilon^2),
\end{align}
where $\mathcal{I}_{\theta_1}^{(d_2)}$ denotes the Fisher Information contained in $d_2$ consecutive observations. 
Under the exponential family parameterization, this Fisher Information is equivalent to the variance of the sufficient statistic, given as
\begin{align}
    \mathcal{I}_{\theta_2}^{(d_2)}=\mathrm{Var}_{\mu_{\theta_2}}\Big(\sum_{i=1}^{d_2} f(X_i)\Big).
\end{align}

Recall that we fix the delay constraint $D=2$, the maximal delay $d_2=2$ dominates the information accumulation process. Consequently, it suffices to analyze the Fisher information corresponding to $d_2=2$.
In this case, the Fisher information can be decomposed as
\begin{align}\label{eq:fisher-2}
    \mathcal{I}_{\theta_2}^{(2)}\!=\!\mathrm{Var}_{\mu_{\theta_2}}(X_1)\!+\!\mathrm{Var}_{\mu_{\theta_2}}(X_1)\!+\!\mathrm{Cov}_{\mu_{\theta_2}}(X_1+X_2).
\end{align}
Since $f(X_i)$ is Bernoulli with parameter $\mu_{\theta_2}(1)$ under stationarity, we have $\mathrm{Var}_{\mu_{\theta_2}}(X_1)=\mathrm{Var}_{\mu_{\theta_2}}(X_2)\triangleq \mu_{\theta_2}(1)(1-\mu_{\theta_2}(1))$. 
Moreover, the covariance is governed by the second largest eigenvalue $\lambda_2^{(2)}$ of the transition matrix $P_{\theta_2}$, yielding $\mathrm{Cov}_{\mu_{\theta_2}}(X_1+X_2)= \mu_{\theta_2}(1)(1-\mu_{\theta_2}(1))\lambda_2^{(2)}$. 
Substituting into \eqref{eq:fisher-2} gives
\begin{align}
    \mathcal{I}_{\theta_2}^{(2)}=\mu_{\theta_2}(1)(1-\mu_{\theta_2}(1))(2+2\lambda_2^{(2)}).
\end{align}

Furthermore, under the delay constraint $D=2$, any admissible sampling strategy is forced to select each arm at least once every three steps to prevent the delay from exceeding the limit. This implies that the normalized occupation measure satisfies $\nu(a)\ge \frac{1}{3}$.
Combining with the change-of-measure inequality in \eqref{eq:change-of-measure}, we obtain
\begin{align}\label{eq:case_start}
    \mathbb{E[\tau_\delta]}\ge\frac{6}{\varepsilon^2\mu_{\theta_2}(1)(1-\mu_{\theta_2}(1))(2+2\lambda_2^{(2)})}\log\Big(\frac{1}{\delta}\Big).
\end{align}

The lower bound derived above is expressed in terms of the perturbation parameter $\varepsilon$.
To relate it to the intrinsic difficulty of the problem instance, we next characterize the quantitative relationship between the perturbation magnitude $\varepsilon$ and the optimality gap $\Lambda_{\theta_2}$ in the following lemma.
\begin{lemma}\label{lemma:varepsilon}
    The perturbation $\varepsilon$ scales linearly with the optimality gap $\Lambda_{\theta_2}$ such that
    \begin{align}
        \varepsilon=\Omega(\Lambda_{\theta_2}).
    \end{align}
\end{lemma}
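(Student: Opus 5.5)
The approach is to view the cost $g_{\theta}(\gamma^*)$ of a single arm as a smooth scalar function of $\theta$ and to use a mean-value argument. Set $G(\theta)\triangleq g_{\theta}(\gamma^*)=\eta_\theta(\gamma^*)+\Gamma_\theta$. In the two-state case with $f(s)=s$, \eqref{eq:mean} and \eqref{eq:covariance} give
\[
G(\theta)=\big(\tfrac12-\gamma^*\big)\mu_\theta(1)+\mu_\theta(1)P_\theta(1|1).
\]
Both $\mu_\theta(1)$ and $P_\theta(1|1)$ come from the Perron--Frobenius normalization \eqref{eq:P_theta} of the positive matrix $\tilde P_\theta$. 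For a $2\times2$ positive matrix the Perron eigenvalue is a simple root of the characteristic polynomial, so $\rho(\theta)$ and $\bs v_\theta$ are real-analytic in $\theta$. The closed form \eqref{eq:mu(1)} shows this directly for $\mu_\theta(1)$. Hence $G$ is $C^1$ on $\mathbb R$.

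The first step is to pin down what the alternative instance requires. Arm $1$ has to stop being optimal under $\bs\theta'=(\theta_1,\theta_2-\varepsilon)$. The smallest such $\varepsilon$ satisfies $G(\theta_2-\varepsilon)\le G(\theta_1)$, which means $G(\theta_2)-G(\theta_2-\varepsilon)\ge\Lambda_{\theta_2}$. The mean value theorem gives a $\tilde\theta\in(\theta_2-\varepsilon,\theta_2)$ with
\[
\Lambda_{\theta_2}\le G'(\tilde\theta)\,\varepsilon\le L\,\varepsilon,\qquad L\triangleq\sup_{\theta\in[\theta_1,\theta_2]}|G'(\theta)|.
\]
The interval can be restricted this way because the crossing happens no later than $\theta_2-\varepsilon=\theta_1$; the paper's monotonicity claim that a larger $\theta$ gives a larger AoI is used here. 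Therefore $\varepsilon\ge\Lambda_{\theta_2}/L$, which is the claim $\varepsilon=\Omega(\Lambda_{\theta_2})$, provided $L<\infty$.

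The second step is to show $L<\infty$ with an explicit constant. On the compact interval $[\theta_1,\theta_2]$ this follows from continuity of $G'$. For a cleaner constant, use the exponential-family identities of \cite{moulos2019optimal}. The log-Perron root $\Lambda(\theta)=\log\rho(\theta)$ satisfies $\Lambda'(\theta)=\mathbb E_{\mu_\theta}[f]=\mu_\theta(1)$, and $\Lambda''(\theta)$ equals the asymptotic variance, which is bounded by $\mu_\theta(1)(1-\mu_\theta(1))\tfrac{1+\lambda_2}{1-\lambda_2}$. The factor $P_\theta(1|1)=(1-q)e^\theta v_\theta(1)/(\rho(\theta)v_\theta(1))=(1-q)e^\theta/\rho(\theta)$ has derivative bounded by a constant times itself. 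Since $|\gamma^*|$ and $f$ are bounded, $|G'|$ is bounded by constants that depend only on $p$, $q$, $\theta_1$, $\theta_2$ and the spectral gap.

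The main obstacle is the direction of the inequality and whether $G$ is monotone. The mean value theorem only gives $\varepsilon\ge\Lambda/L$ if the perturbation is the minimal one that flips optimality, so I would define $\varepsilon$ as exactly that minimal flipping amount. If $G$ were not monotone on the interval, the flip might need only a smaller shift than $\theta_2-\theta_1$, but the Lipschitz bound still applies on whatever interval the crossing lies in. For that reason I would take $L$ as a supremum over a neighborhood large enough to contain every admissible crossing and argue via compactness. A matching upper estimate $\varepsilon=O(\Lambda_{\theta_2})$, which would give linear scaling in both directions, needs $G'$ bounded away from zero near $\theta_2$. I would note this as optional, since only the $\Omega$ direction enters \eqref{eq:case_start}.
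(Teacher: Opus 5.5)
Your argument is correct and is essentially the paper's linearization argument. The paper Taylor-expands $g_{\theta}(\gamma^*)$ to first order at $\theta_2$, discards the $o(\varepsilon)$ remainder, and divides by the finite, nonzero derivative $\vert\partial g_{\theta_2}(\gamma^*)/\partial\theta_2\vert$; your mean-value bound with $L=\sup_{[\theta_1,\theta_2]}|G'|$ makes that same step rigorous for every flipping $\varepsilon$, needs no neglected remainder and no assumption $G'(\theta_2)\neq 0$, and costs only a global rather than local constant.

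One correction to your closing remark: $\varepsilon$ enters \eqref{eq:case_start} through $1/\varepsilon^2$, so turning \eqref{eq:case_start} into Remark~\ref{remark:lowerbound} requires the upper estimate $\varepsilon=O(\Lambda_{\theta_2})$ for the minimal flipping perturbation (this is where $G'$ bounded away from zero on $[\theta_2-\varepsilon,\theta_2]$, cf.\ the paper's ``nonzero'' hypothesis, is needed), so that direction is the one the lower bound actually uses, not an optional extra.
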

\begin{IEEEproof}
    Since the objective function $g_{\theta_a}(\gamma)$ is continuously differentiable with respect to $\theta_a$, we perform a first-order Taylor expansion of $g_{\theta_2-\varepsilon}(\gamma^*)$ around $\theta_1$, given by
    \begin{align}
        g_{\theta_2-\varepsilon}(\gamma^*)=g_{\theta_2}(\gamma^*)-\varepsilon\frac{\partial g_{\theta_a}(\gamma^*)}{\partial\theta_a}\Big\vert_{\theta_a=\theta_2}+o(\varepsilon)
    \end{align}
    For arm $1$ to not be optimal under the alternative instance $\bs{\theta}'$, it is necessary that
    \begin{align}
        g_{\theta_2-\varepsilon}(\gamma^*)\le g_{\theta_1}(\gamma^*).
    \end{align}
    Rearranging this inequality and neglecting higher-order terms yields
    \begin{align}
        \varepsilon\ge\frac{g_{\theta_2}(\gamma^*)-g_{\theta_1}(\gamma^*)}{\Big\vert\frac{\partial g_{\theta_2}(\gamma^*)}{\partial\theta_2}\Big\vert}=\frac{\Lambda_{\theta_2}}{\Big\vert\frac{\partial g_{\theta_2}(\gamma^*)}{\partial\theta_2}\Big\vert}.
    \end{align}
    Since $\frac{\partial g_{\theta_2}(\gamma^*)}{\partial\theta_2}$ is finite and nonzero, this establishes that $\varepsilon = \Omega(\Lambda_{\theta_2})$. This completes the proof.
\end{IEEEproof}

Lemma~\ref{lemma:varepsilon} indicates that the minimal perturbation required to alter the identity of the optimal arm is proportional to the optimality gap. Substituting this linear relationship into the inequality \eqref{eq:case_start} yields an explicit instance-dependent lower bound, summarized in the following remark.
\begin{remark}\label{remark:lowerbound}
    Fix $D=2$. For a two-arm, two-state restless bandit system, the expected stopping time of any $\delta$-PAC algorithm satisfies
\begin{align}
    \mathbb{E[\tau_\delta]}=\Omega\Bigg(\frac{1}{\Lambda_{\theta_2}^2(1+\lambda_2^{(2)})}\log\left(\frac{1}{\delta}\right)\Bigg).
\end{align}
\end{remark}
Remark~\ref{remark:lowerbound} recovers the classical $\Lambda_{\theta_2}^{-2}$ gap dependence and further reveals the role of temporal correlation through the factor $1+\lambda_2^{(2)}$.
Although $\lambda_2^{(2)}\to\pm 1$ both imply a vanishing spectral gap and hence slow mixing, the resulting correlation structures lead to fundamentally different statistical effects.

When $\lambda_2^{(2)}\!\to\!-1$, oscillatory dynamics create strong negative correlations, driving $(1+\lambda_2^{(2)})\!\to\!0$ and increasing the lower bound, 
which is consistent with 
the large hitting-time overhead $\Psi^{\max}_{\theta_a}$ in the upper bound under slow mixing.

When $\lambda_2^{(2)}\!\to\!1$, persistent dynamics yield positive correlations that enhance short-horizon information, allowing the lower bound to decrease despite slow mixing.
This does not contradict the upper bound in which $\Psi^{\max}_{\theta_a}$ reflects the algorithmic cost of enforcing effective decorrelation via regeneration under slow mixing, whereas the lower bound captures the intrinsic statistical difficulty without imposing such decorrelation requirements.

\section{Numerical Results}
In this section, we present numerical evaluations to demonstrate the empirical performance of our proposed age-optimal BAI scheme compared with three benchmarks.

\subsection{Setup}\label{subsec:numerical_design}


We consider a mobile edge computing system with $K=5$ edge nodes. 
Each edge node is modeled as an arm in a restless multi-armed bandit framework, where the state represents the background congestion level induced by co-located services and users.
Specifically, the congestion state of each node evolves over the finite state space $\mathcal{S}=\{0,1,2,3,4\}$ according to an unknown Markov process.

Following the deterministic emission abstraction adopted in our system model, each congestion state corresponds to a quantized end-to-end service delay experienced by an update packet.
We model the service delay as a monotone function of the congestion state by introducing a baseline networking delay $d_{\mathrm{net}}$ and an incremental computation delay $\zeta$.
Here, $d_{\mathrm{net}}$ represents the combined transmission and processing delay under negligible background congestion,
while $\zeta$ captures the additional computation overhead
induced by each incremental congestion level.
Accordingly, the observed service delay at each state $s\in\mathcal{S}$ is given by
\begin{align}\label{eq:setup}
    f(s)=d_{\mathrm{net}}+s\cdot\zeta.
\end{align}
In our numerical evaluation, we set $d_{\mathrm{net}}=10\,\mathrm{ms}$, which is on the order of round-trip latency reported for modern 5G access networks under favorable conditions~\cite{6824752}, and set $\zeta=5\,\mathrm{ms}$ to yield a realistic range of service delays for edge computing applications~\cite{8016573}.

Therefore, larger congestion states correspond to longer service delays based on \eqref{eq:setup}.
Although service delays may exhibit additional randomness in practice systems, such a quantized delay abstraction preserves the relative impact of congestion on AoI and serves as a first-order approximation suitable for performance evaluation.

To evaluate the sample efficiency of identifying the edge node that minimizes the time-average Age of Information, we compare the proposed age-optimal BAI algorithm with three benchmark variants: Markovian-LUCB, Markovian-UCB-BAI, and Markovian-Uniform Sampling.
All benchmarks adopt the Markovian regeneration sampling technique to address the structured AoI objective under restless dynamics.
Specifically, Markovian-LUCB generalizes the classical LUCB algorithm~\cite{kaufmann2016complexity} by incorporating confidence bounds over regenerative samples.
Markovian-UCB-BAI extends UCB algorithm to BAI setting by combining LUCB-style elimination to ensure sufficient exploration.
Markovian-Uniform Sampling follows a non-adaptive round-robin strategy, selecting arm $b\bmod K$ at each block $b$ regardless of performance estimates.

\begin{figure*}[t]
    \centering
    \subfigure[]{
        \centering
        \includegraphics[width=0.31\linewidth]{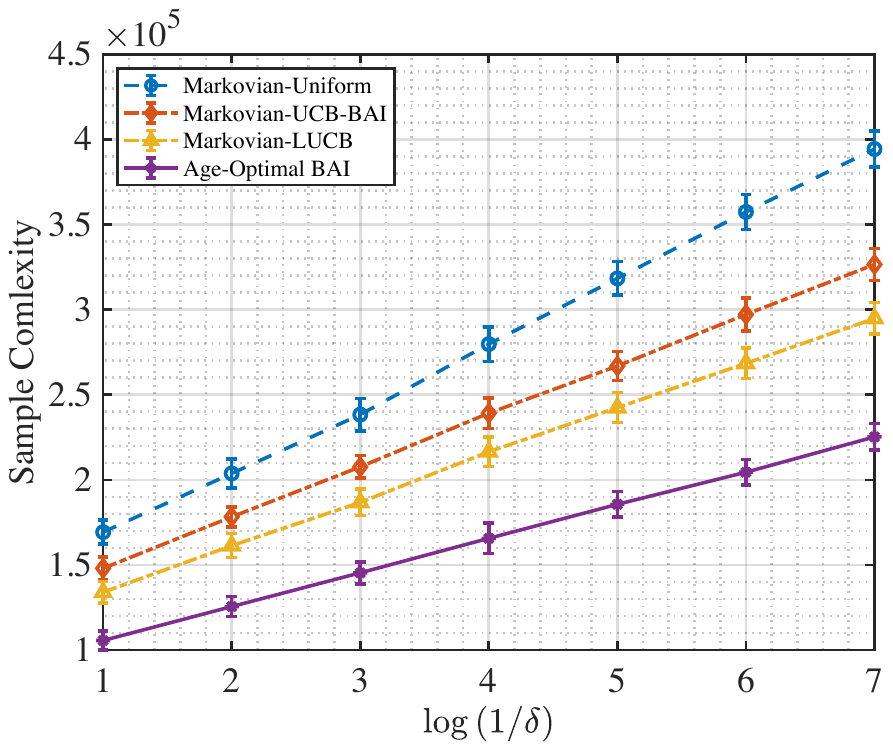}
        \label{fig:compare_upper_delta}
    }
    \subfigure[]{
        \centering
        \includegraphics[width=0.31\linewidth]{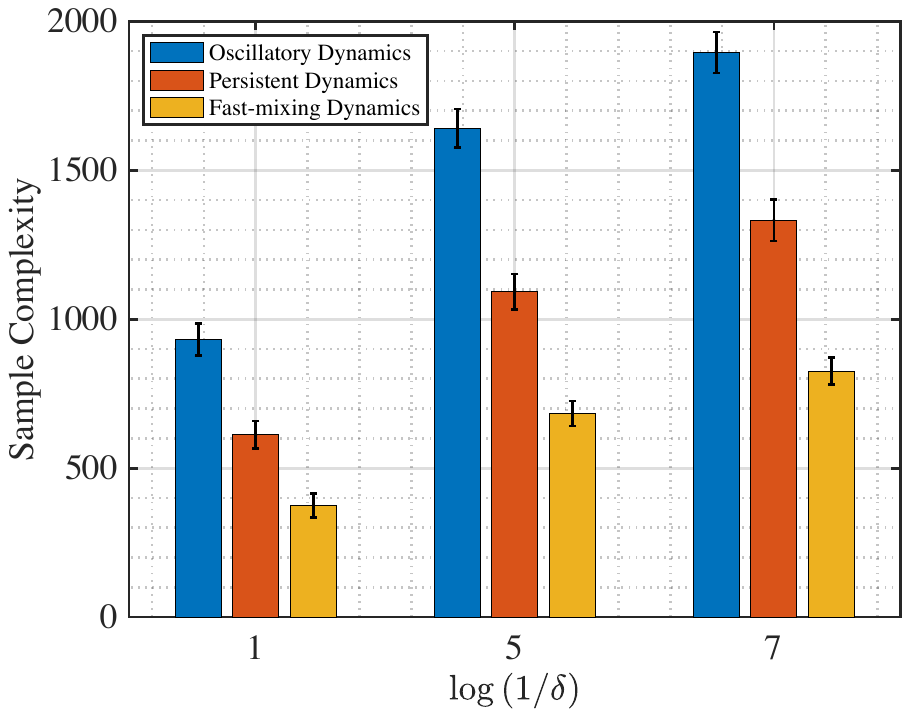}
        \label{fig:compare_mixing}
    }
    \subfigure[]{
        \centering
        \includegraphics[width=0.31\linewidth]{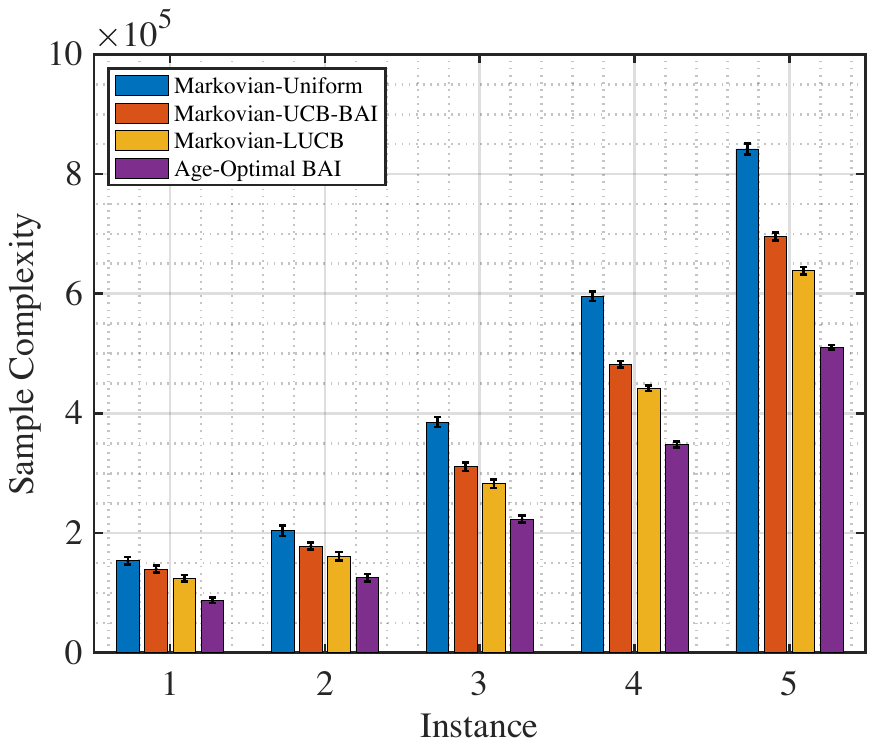}
        \label{fig:compare_upper_instance}
    }
\caption{Numerical evaluations of the performance comparison with (a) the confidence level $\delta$, (b) the Markov mixing behavior, and (c) the instances.
}
\label{fig:total}
\end{figure*}

\subsection{Performance Comparison via Confidence Levels}\label{subsec:compare_delta}

We interpret $\theta_a$ as a load-intensity parameter that biases the congestion-state Markov dynamics of edge node $a$ toward higher-delay states.
Specifically, a larger value of $\theta_a$ increases the likelihood of the node occupying more congested states, resulting in a larger long-term average AoI.
In this subsection, we start with a representative problem instance $\bs{\theta}=[0.1,0.3,0.5,0.7,0.9]$, which captures a heterogeneous edge computing
environment with progressively increasing background congestion levels across edge nodes.

To evaluate the performance gains of our proposed age-optimal BAI scheme, we compare the sample complexity of the other three benchmarks with the confidence level $\delta$ varying from $\delta=10^{-1}$ to $\delta=10^{-7}$ (i.e., $\log(1/\delta)$ from $1$ to $7$).
For each $\delta$, we run $1000$ trials to estimate the average sample complexity and construct the corresponding $95\%$ confidence intervals.
We plot the curves along with the error bars to show the sample complexity for all four methods in Fig~\ref{fig:compare_upper_delta}.
As expected, the sample complexity is increasing in $\log(1/\delta)$ for all four schemes, which is consistent with the upper bound behavior as stated in Theorem \ref{theorem:upper_bound}.
We next observe that the age-optimal BAI strategy significantly saves the sample complexity compared to the other methods.
For example, when $\log(1/\delta)=2$, age-optimal BAI achieves the sample complexity that is $38\%$ lower than Markovian-Uniform sampling, $29\%$ lower than Markovian-UCB-BAI and $22\%$ lower than Markovian-LUCB.
This saving expands with increasing $\log(1/\delta)$ to $43\%$, $31\%$ and $24\%$ respectively at $\log(1/\delta)=7$.
\textit{That is, compared to the other three benchmarks, our proposed age-optimal BAI strategy achieves greater sample efficiency when a stricter confidence level is required.}

\subsection{Performance Comparison via Markov Mixing}
To illustrate the impact of Markov mixing behavior on sample complexity, we consider a two-arm, two-state case study consistent with the lower-bound analysis.
Across all instances, arm $1$ is fixed, while arm $2$ is constructed to exhibit different mixing behaviors by varying its transition matrix.
Specifically, we consider three instances with $P_{\theta_2}=[0.01\,0.99;0.99\,0.01]$, $P_{\theta'_2}=[0.99\,0.01;0.01\,0.99]$ and $P_{\theta'_2}=[0.5\,0.5;0.5\,0.5]$.
These matrices correspond to oscillatory dynamics, persistent dynamics, and fast-mixing dynamics, respectively.

Fig.~\ref{fig:compare_mixing} plots the sample complexity of the proposed age-optimal BAI algorithm under varying confidence levels $\delta$.
We observe that the oscillatory instance incurs the highest sample complexity, consistent with the inflation predicted by the lower bound when $1+\lambda_2\to0$.
The persistent instance also requires more samples than the fast-mixing case, aligning with the upper bound through the increased overhead $\Psi^{\max}_{\theta_a}$ induced by slow mixing.

\subsection{Performance Comparison via Instances}
To characterize the robustness of our proposed scheme across varying system scenarios, we design five representative instances with increasing difficulty in identifying the age-optimal edge node.
In instance 1, we consider a stepped configuration with $\bs{\theta}(1)=[0.3,0.7,0.7,0.7,0.7]$, which models a heterogeneous edge environment where a single lightly loaded node coexists with several highly congested nodes.
For instances $\kappa=\{2,3,4,5\}$, we fix the optimal node at $\theta_1(\kappa)=0.1$ and gradually reduce the load-intensity gap between the
optimal and suboptimal nodes by defining $\theta_a(\kappa)=0.1+\frac{0.2(a-1)}{\kappa-1}$.
For example, instance $3$ corresponds to $\bs{\theta}(3)=[0.1,0.2,0.3,0.4,0.5]$.
As $\kappa$ increases, the congestion levels of the suboptimal nodes become
increasingly similar to that of the optimal node, leading to progressively smaller AoI gaps.
These instances capture realistic edge computing scenarios in which multiple
edge nodes exhibit comparable long-term performance due to similar background
traffic conditions.
Consequently, distinguishing the best node becomes increasingly challenging,
allowing us to systematically evaluate the sample efficiency and robustness of
the proposed age-optimal BAI strategy under varying degrees of difficulty.

Using the setup from Section \ref{subsec:numerical_design}, we fix the confidence level $\delta=0.01$ and compare the sample complexities of all four algorithms in Fig.~\ref{fig:compare_upper_instance}.
Our results show that the proposed age-optimal BAI strategy consistently achieves the lowest sample complexity across all instances.
\textit{Moreover, this advantage becomes increasingly significant as the best arm becomes harder to distinguish.}
For example, in instance 5, the age-optimal BAI scheme achieves 
$20\%$ reduction compared to Markovian-LUCB.


\section{Conclusion}
In this paper, we address the age-optimal BAI problem within a fixed-confidence setting, using the framework of RMABs. Our goal is to identify the best arm with high confidence minimum number of sample. To this end, we introduce an age-aware LUCB algorithm that incorporates a Markovian sampling technique. We also derive an upper bound on the sample complexity, which is influenced by the Markov chain's mixing behavior. Furthermore, we establish a fundamental information-theoretic lower bound characterized by a parameter $D$, which promotes diversity in arm selection.
A two-arm, two-state restless bandit case study of the lower bound reveals that the fundamental sample complexity scales quadratically with the inverse optimality gap and is fundamentally shaped by the temporal correlation of the underlying Markov dynamics.
This insight complements the analysis of the upper bound and highlights the intrinsic role of Markovian dependence in age-optimal bandit learning.

Our results demonstrate that the proposed scheme significantly reduces sampling costs, especially under stricter confidence requirements and in mobile edge computing systems with small performance gaps. As the first study to investigate the age-optimal BAI problem under unknown restless edge node dynamics, this work opens several promising avenues for future research. These include extending the framework to Hidden Markov Models with stochastic emissions and developing optimal waiting strategies.

\appendices
\section{Proof of Proposition~\ref{proposition:concentration}}\label{proof:concentration}
We prove Proposition~\ref{proposition:concentration} by establishing concentration bounds for the two components of the objective function $g_{\theta_a}(\gamma)$ separately, and then combining them via a union bound.

Recall that $g_{\theta_a}(\gamma)=\eta_{\theta_a}(\gamma)+\Gamma_{\theta_a}$, where $\eta_{\theta_a}(\gamma)$ and $\Gamma_{\theta_a}$ are estimated using regenerative samples. We begin by constructing two normalized centered functions over the regenerative samples, aligned with the empirical estimators defined in \eqref{eq:estimator_eta} and \eqref{eq:estimator_Gamma}, given by
\begin{align}
    &\tilde{f}_{\theta_a}(Y_{a,i})=\frac{\frac{1}{2}Y^2_{a,i}-\gamma Y_{a,i}-\eta_{\theta_a}(\gamma)}{(\frac{1}{2}\bar{f}^2-\gamma \bar{f})-(\frac{1}{2}\underline{f}^2-\gamma \underline{f})},\\
    &\hat{f}_{\theta_a}(Y_{a,i},Y_{a,i+1})=\frac{Y_{a,i} Y_{a,i+1}-\Gamma_{\theta_a}}{\bar{f}^2-\underline{f}^2}.
\end{align}
Due to the compact support of $Y_{a,i}\in[\underline{f},\bar{f}]$, both functions satisfy
\begin{align}
    \Vert\tilde{f}_{\theta_a}\Vert_{\infty}\le1,~\Vert\hat{f}_{\theta_a}\Vert_{\infty}\le1,~\Vert\tilde{f}_{\theta_a}\Vert^2_{2}\le1, ~\Vert\hat{f}_{\theta_a}\Vert^2_{2}\le1.
\end{align}

Applying the concentration inequality for additive functionals of Markov chains \cite[Theorem~3.3]{lezaud1998chernoff}, for any $0<\epsilon_1\le1$, we have
\begin{align}
&\mathbb{P}\bigg[\Big\vert\hat{\eta}_{a,T_{j_b}^a}(\gamma)-\eta_{\theta_a}(\gamma)\Big\vert>\epsilon_1\bigg]\nn
\le &2\Vert\bs{\omega}_a\Vert_2\exp{\left(-\frac{T^a_{j_b}\epsilon_1^2\beta_{\theta_a}}{24((\frac{1}{2}\bar{f}^2-\gamma \bar{f})-(\frac{1}{2}\underline{f}^2-\gamma \underline{f}))^2}\right)},
\end{align}
where $\boldsymbol{\omega}_a$ denotes the initialization bias vector and $\beta_{\theta_a}$ is the pseudo spectral gap of arm $a$ defined in \eqref{eq:pseudo}.
Similarly, for any $0<\epsilon_2\le1$, we obtain
\begin{align}
\hspace{-0.9em}
\mathbb{P}\bigg[\Big\vert\hat{\Gamma}_{a,T_{j_b}^a}\!-\!\Gamma_{\theta_a}\Big\vert\!>\!\epsilon_2\bigg]\!\le\!2\Vert\bs{\omega}_a\Vert_2\exp{\left(\!\!-\frac{T^a_{j_b}\epsilon_2^2\beta_{\theta_a}}{24(\bar{f}^2-\underline{f}^2)^2}\!\!\right)}.
\end{align}

Using the inequality $(\frac{1}{2}M^2-\gamma M)-(\frac{1}{2}m^2-\gamma m)\le M^2-m^2$, we set $\epsilon_1=\epsilon_2=\epsilon/2$ for $\epsilon\in(0,2]$.
By the definition of the empirical estimator $\hat{g}_{\theta_a}$ in \eqref{eq:empirical age}, we obtain that
\begin{align}
\mathbb{P}\Big[\big\vert\hat{g}_{\theta_a}-&g_{\theta_a}\big\vert>\epsilon\Big]\le\mathbb{P}\bigg[\Big\vert\hat{\eta}_{a,T_{j_b}^a}(\gamma)-\eta_{\theta_a}(\gamma)\Big\vert>\epsilon_1\bigg]\nn
&\quad+\mathbb{P}\bigg[\Big\vert\hat{\Gamma}_{a,T_{j_b}^a}-\sigma^2_{\theta_a}\Big\vert>\epsilon_2\bigg]\nn
&\le4\Vert\bs{\omega}_a\Vert_2\exp{\left(-\frac{T^a_{j_b}\epsilon^2\beta_{\theta_a}}{96(\bar{f}^2-\underline{f}^2)^2}\right)}.
\end{align}

Applying a union bound over all $j_b$ sampling blocks and $K$ arms, to ensure that the concentration inequality in \eqref{eq:concentration_bound} holds with probability at least $1-\delta$, it suffices that
\begin{align}\label{eq:confidence_inequality}
4\Vert\bs{\omega}_a\Vert_2\exp{\left(-\frac{T^a_{j_b}\epsilon^2\beta_{\theta_a}}{96(\bar{f}^2-\underline{f}^2)^2}\right)}\le \frac{\delta}{j_bK}.
\end{align}
Solving \eqref{eq:confidence_inequality} for $\epsilon$ yields
\begin{align}
\epsilon\ge\sqrt{\frac{96(\bar{f}^2-\underline{f}^2)^2\log{(\frac{4\Vert\bs{\omega}_a\Vert_2j_bK}{\delta})}}{T^a_{j_b}\beta_{\theta_a}}}.
\end{align}

Using the uniform bounds $\Vert\boldsymbol{\omega}_a\Vert_2\le 1/\bar{\mu}$ and $\beta_{\theta_a}\ge\beta^{\min}$, and choosing a constant $c\ge96(\bar{f}^2-\underline{f}^2)^2/\beta^{\min}$, we simplify the expression for $\epsilon$ as
\begin{align}
\epsilon=\sqrt{\frac{c\log{(\frac{4j_bK}{\delta \bar{\mu}})}}{T^a_{j_b}}}.
\end{align}
This completes the proof.

\section{Proof of Corollary~\ref{coro:1/4}}\label{proof:coro_1/4}
Since arm $1$ is the unique optimal arm and the optimality gap of any suboptimal arm $a\neq 1$ is defined as
$\Lambda_{\theta_a}\triangleq g_{\theta_a}(\gamma^*)-g_{\theta_1}(\gamma^*)>0$ for $a\neq 1$.
Recall that we define the midpoint threshold as $\xi\triangleq \frac{1}{2}\big(g_{\theta_1}(\gamma^*)+g_{\theta_2}(\gamma^*)\big)$.
Then, for any suboptimal arm $a\neq 1$, we have
\begin{align}\label{eq:gap-midpoint}
    g_{\theta_a}(\gamma^*)-\xi&=g_{\theta_a}(\gamma^*)-\frac{g_{\theta_1}(\gamma^*)+g_{\theta_2}(\gamma^*)}{2}\nn
    &\ge g_{\theta_a}(\gamma^*)-\frac{g_{\theta_1}(\gamma^*)+g_{\theta_a}(\gamma^*)}{2}=\frac{\Lambda_{\theta_a}}{2},
\end{align}
where the inequality follows from $g_{\theta_2}(\gamma^*)\le g_{\theta_a}(\gamma^*)$ for all $a\neq 1$ as we assume $g_{\theta_a}(\gamma^*)$ is in the ascending order.

Under the good event $\mathcal{E}$ as stated in Lemma~\ref{lemma:event}, for every arm $a$ and block $b$, we first have
\begin{align}\label{eq:ci_bounds}
    \big\vert\hat{g}_a^b-g_{\theta_a}(\gamma^*)\big\vert\le \mathrm{rad}_a(b).
\end{align}
By introducing $\hat{g}_a^b\triangleq\hat{\eta}_a^b(\gamma^*)+\hat{\Gamma}_a^b$ and the definitions of the confidence bounds in \eqref{eq:confidence_interval}, we obtain that
\begin{align}
    \mathrm{LCB}_a(b)=\hat{g}_a^b-\mathrm{rad}_a(b),~~
    \mathrm{UCB}_a(b)=\hat{g}_a^b+\mathrm{rad}_a(b).
\end{align}
In particular, \eqref{eq:ci_bounds} implies $\hat{g}_a^b \ge g_{\theta_a}(\gamma^*)-\mathrm{rad}_a(b)$, and hence
\begin{align}\label{eq:lcb_lower}
    \mathrm{LCB}_a(b)=\hat{g}_a^b-\mathrm{rad}_a(b)\ge g_{\theta_a}(\gamma^*)-2\mathrm{rad}_a(b).
\end{align}
Combining \eqref{eq:gap-midpoint} and \eqref{eq:lcb_lower} yields
\begin{align}\label{eq:lucb-rad}
    \mathrm{LCB}_a(b)&\ge g_{\theta_a}(\gamma^*)-2\mathrm{rad}_a(b)\nn
    &\ge \Big(\xi+\frac{\Lambda_{\theta_a}}{2}\Big)-2\mathrm{rad}_a(b).
\end{align}

By the definition of active arms in the Age-Aware LUCB procedure, a suboptimal arm can remain active only if $\mathrm{LCB}_a(b)<\xi$.
Therefore,
it suffices to ensure $\mathrm{LCB}_a(b)\ge\xi$ to declare arm $a$ inactive.
From equation \eqref{eq:lucb-rad}, the condition $\mathrm{rad}_a(b)\le \Lambda_{\theta_a}/4$ implies $\mathrm{LCB}_a(b)\ge \xi$.
Hence, arm $a$ is no longer active, which completes the proof.

\bibliographystyle{IEEEtran}
\bibliography{reference}

@String{Computing = "Computing" }

@String{Springer = "Springer-Verlag" }

@ARTICLE{yates2021age,
  author={Yates, Roy D. and Sun, Yin and Brown, D. Richard and Kaul, Sanjit K. and Modiano, Eytan and Ulukus, Sennur},
  journal={{IEEE} J. Sel. Areas Commun.}, 
  title={Age of Information: An Introduction and Survey}, 
  year={2021},
  volume={39},
  number={5},
  pages={1183-1210}
}

@book{horn2012matrix,
  title={Matrix analysis},
  author={Horn, Roger A and Johnson, Charles R},
  year={2012},
  publisher={Cambridge University Press}
}

@article{paulin2015concentration,
  title={Concentration inequalities for Markov chains by Marton couplings and spectral methods},
  author={Paulin, Daniel},
  year={2015}
}

@article{lezaud1998chernoff,
  title={Chernoff-type bound for finite Markov chains},
  author={Lezaud, Pascal},
  journal={Annals of Applied Probability},
  pages={849--867},
  year={1998},
  publisher={JSTOR}
}

@article{dinkelbach1967nonlinear,
  title={On nonlinear fractional programming},
  author={Dinkelbach, Werner},
  journal={Management Science},
  volume={13},
  number={7},
  pages={492--498},
  year={1967},
  publisher={INFORMS}
}

@article{kaufmann2016complexity,
  title={On the complexity of best-arm identification in multi-armed bandit models},
  author={Kaufmann, Emilie and Capp{\'e}, Olivier and Garivier, Aur{\'e}lien},
  journal={J. Mach. Learn. Res.},
  volume={17},
  number={1},
  pages={1--42},
  year={2016}
}

@INPROCEEDINGS{kaul2012real,
  author={Kaul, Sanjit and Yates, Roy and Gruteser, Marco},
  booktitle={Proc. {IEEE} {INFOCOM}}, 
  title={Real-time status: How often should one update?}, 
  year={2012},
  pages={2731-2735}
}

@article{bedewy2019minimizing,
  title={Minimizing the age of information through queues},
  author={Bedewy, Ahmed M and Sun, Yin and Shroff, Ness B},
  journal={{IEEE} Trans. Inf. Theory},
  volume={65},
  number={8},
  pages={5215--5232},
  year={2019},
  publisher={IEEE}
}

@ARTICLE{bedewy2019age,
  author={Bedewy, Ahmed M. and Sun, Yin and Shroff, Ness B.},
  journal={{IEEE/ACM} Trans. Netw.}, 
  title={The Age of Information in Multihop Networks}, 
  year={2019},
  volume={27},
  number={3},
  pages={1248-1257}
}

@inproceedings{hsu2018age,
  title={Age of information: Whittle index for scheduling stochastic arrivals},
  author={Hsu, Yu-Pin},
  booktitle={Proc. {IEEE} {ISIT}},
  pages={2634--2638},
  year={2018}
}

@inproceedings{kadota2018optimizing,
  title={Optimizing age of information in wireless networks with throughput constraints},
  author={Kadota, Igor and Sinha, Abhishek and Modiano, Eytan},
  booktitle={Proc. {IEEE} {INFOCOM}},
  pages={1844--1852},
  year={2018}
}

@article{tripathi2024whittle,
  title={A whittle index approach to minimizing functions of age of information},
  author={Tripathi, Vishrant and Modiano, Eytan},
  journal={{IEEE/ACM} Trans. Netw.},
  volume={32},
  number={6},
  pages={5144--5158},
  year={2024}
}

@inproceedings{arafa2019timely,
  title={Timely cloud computing: Preemption and waiting},
  author={Arafa, Ahmed and Yates, Roy D and Poor, H Vincent},
  booktitle={Proc. {IEEE} Annual Allerton Conference on Communication, Control, and Computing},
  pages={528--535},
  year={2019}
}

@ARTICLE{yates2018age,
  author={Yates, Roy D. and Kaul, Sanjit K.},
  journal={{IEEE} Trans. Inf. Theory}, 
  title={The Age of Information: Real-Time Status Updating by Multiple Sources}, 
  year={2019},
  volume={65},
  number={3},
  pages={1807-1827}
}

@ARTICLE{talak2020age,
  author={Talak, Rajat and Modiano, Eytan H.},
  journal={{IEEE} Trans. Inf. Theory}, 
  title={Age-Delay Tradeoffs in Queueing Systems}, 
  year={2021},
  volume={67},
  number={3},
  pages={1743-1758}
}

@INPROCEEDINGS{zhou2024age,
  author={Zhou, Mengqiu and Zhang, Meng and Yang, Howard H. and Yates, Roy D.},
  booktitle={Proc. {IEEE} {INFOCOM}}, 
  title={Age-minimal CPU Scheduling}, 
  year={2024},
  pages={401-410}
}

@ARTICLE{arafa2019age,
  author={Arafa, Ahmed and Yang, Jing and Ulukus, Sennur and Poor, H. Vincent},
  journal={{IEEE} Trans. Inf. Theory}, 
  title={Age-Minimal Transmission for Energy Harvesting Sensors With Finite Batteries: Online Policies}, 
  year={2020},
  volume={66},
  number={1},
  pages={534-556}
}

@ARTICLE{sun2019sampling,
  author={Sun, Yin and Cyr, Benjamin},
  journal={J. Commun. Networks}, 
  title={Sampling for data freshness optimization: Non-linear age functions}, 
  year={2019},
  volume={21},
  number={3},
  pages={204-219}
}

@ARTICLE{sun2017update,
  author={Sun, Yin and Uysal-Biyikoglu, Elif and Yates, Roy D. and Koksal, C. Emre and Shroff, Ness B.},
  journal={{IEEE} Trans. Inf. Theory}, 
  title={Update or Wait: How to Keep Your Data Fresh}, 
  year={2017},
  volume={63},
  number={11},
  pages={7492-7508}
}

@article{whittle1988restless,
  title={Restless bandits: Activity allocation in a changing world},
  author={Whittle, Peter},
  journal={Journal of Applied Probability},
  volume={25},
  number={A},
  pages={287--298},
  year={1988},
  publisher={Cambridge University Press}
}

@ARTICLE{tekin2012online,
  author={Tekin, Cem and Liu, Mingyan},
  journal={{IEEE} Trans. Inf. Theory}, 
  title={Online Learning of Rested and Restless Bandits}, 
  year={2012},
  volume={58},
  number={8},
  pages={5588-5611}
}

@ARTICLE{Karthik2023best,
  author={Karthik, P. N. and Reddy, Kota Srinivas and Tan, Vincent Y. F.},
  journal={{IEEE} Trans. Inf. Theory}, 
  title={Best Arm Identification in Restless Markov Multi-Armed Bandits}, 
  year={2023},
  volume={69},
  number={5},
  pages={3240-3262}
}

@article{guha2010approximation,
  title={Approximation algorithms for restless bandit problems},
  author={Guha, Sudipto and Munagala, Kamesh and Shi, Peng},
  journal={J. {ACM}},
  volume={58},
  number={1},
  pages={1--50},
  year={2010},
  publisher={ACM New York, NY, USA}
}

@inproceedings{wang2020restless,
  title={Restless-UCB, an efficient and low-complexity algorithm for online restless bandits},
  author={Wang, Siwei and Huang, Longbo and Lui, John},
  booktitle={Advances in NeurIPS},
  volume={33},
  pages={11878--11889},
  year={2020}
}

@inproceedings{nakhleh2021neurwin,
  title={NeurWIN: Neural Whittle index network for restless bandits via deep RL},
  author={Nakhleh, Khaled and Ganji, Santosh and Hsieh, Ping-Chun and Hou, I and Shakkottai, Srinivas and others},
  booktitle={Advances in NeurIPS},
  volume={34},
  pages={828--839},
  year={2021}
}

@article{ortner2012regret,
  title={Regret bounds for restless markov bandits},
  author={Ortner, Ronald and Ryabko, Daniil and Auer, Peter and Munos, R{\'e}mi},
  journal={Theor. Comput. Sci.},
  volume={558},
  pages={62--76},
  year={2014}
}

@inproceedings{jiang2023online,
  title={Online restless bandits with unobserved states},
  author={Jiang, Bowen and Jiang, Bo and Li, Jian and Lin, Tao and Wang, Xinbing and Zhou, Chenghu},
  booktitle={International Conference on Machine Learning},
  pages={15041--15066},
  year={2023},
  organization={PMLR}
}

@article{liu2010indexability,
  title={Indexability of restless bandit problems and optimality of whittle index for dynamic multichannel access},
  author={Liu, Keqin and Zhao, Qing},
  journal={{IEEE} Trans. Inf. Theory},
  volume={56},
  number={11},
  pages={5547--5567},
  year={2010},
  publisher={IEEE}
}

@inproceedings{wang2023optimistic,
  title={Optimistic whittle index policy: Online learning for restless bandits},
  author={Wang, Kai and Xu, Lily and Taneja, Aparna and Tambe, Milind},
  booktitle={Proc. AAAI Conf. Artif. Intell.},
  volume={37},
  number={8},
  pages={10131--10139},
  year={2023}
}

@article{akbarzadeh2022conditions,
  title={Conditions for indexability of restless bandits and an algorithm to compute Whittle index},
  author={Akbarzadeh, Nima and Mahajan, Aditya},
  journal={Advances in Applied Probability},
  volume={54},
  number={4},
  pages={1164--1192},
  year={2022},
  publisher={Cambridge University Press}
}

@article{chen2021information,
  title={Information freshness-aware task offloading in air-ground integrated edge computing systems},
  author={Chen, Xianfu and Wu, Celimuge and Chen, Tao and Liu, Zhi and Zhang, Honggang and Bennis, Mehdi and Liu, Hang and Ji, Yusheng},
  journal={{IEEE} J. Sel. Areas Commun.},
  volume={40},
  number={1},
  pages={243--258},
  year={2021},
  publisher={IEEE}
}

@article{chen2023joint,
  title={Joint optimization of sensing and computation for status update in mobile edge computing systems},
  author={Chen, Yi and Chang, Zheng and Min, Geyong and Mao, Shiwen and H{\"a}m{\"a}l{\"a}inen, Timo},
  journal={{IEEE} Trans. Wirel. Commun.},
  volume={22},
  number={11},
  pages={8230--8243},
  year={2023},
  publisher={IEEE}
}

@article{chen2020age,
  title={Age of information aware radio resource management in vehicular networks: A proactive deep reinforcement learning perspective},
  author={Chen, Xianfu and Wu, Celimuge and Chen, Tao and Zhang, Honggang and Liu, Zhi and Zhang, Yan and Bennis, Mehdi},
  journal={{IEEE} Trans. Wirel. Commun.},
  volume={19},
  number={4},
  pages={2268--2281},
  year={2020},
  publisher={IEEE}
}

@article{chernoff1952measure,
  title={A measure of asymptotic efficiency for tests of a hypothesis based on the sum of observations},
  author={Chernoff, Herman},
  journal={The Annals of Mathematical Statistics},
  pages={493--507},
  year={1952},
  publisher={JSTOR}
}

@article{hoeffding1963probability,
  title={Probability inequalities for sums of bounded random variables},
  author={Hoeffding, Wassily},
  journal={Journal of the American Statistical Association},
  volume={58},
  number={301},
  pages={13--30},
  year={1963},
  publisher={Taylor \& Francis}
}

@book{meyn2012markov,
  title={Markov chains and stochastic stability},
  author={Meyn, Sean P and Tweedie, Richard L},
  year={2012},
  publisher={Springer Science \& Business Media}
}

@article{kadota2018scheduling,
  title={Scheduling policies for minimizing age of information in broadcast wireless networks},
  author={Kadota, Igor and Sinha, Abhishek and Uysal-Biyikoglu, Elif and Singh, Rahul and Modiano, Eytan},
  journal={{IEEE/ACM} Trans. Netw.},
  volume={26},
  number={6},
  pages={2637--2650},
  year={2018},
  publisher={IEEE}
}

@inproceedings{saurav2021minimizing,
  title={Minimizing the sum of age of information and transmission cost under stochastic arrival model},
  author={Saurav, Kumar and Vaze, Rahul},
  booktitle={Proc. {IEEE} {INFOCOM}},
  pages={1--10},
  year={2021}
}

@book{kleinrock1974queueing,
  title={Queueing systems: theory},
  author={Kleinrock, Leonard},
  volume={2},
  year={1974},
  publisher={Wiley}
}

@article{moulos2019optimal,
  title={Optimal best Markovian arm identification with fixed confidence},
  author={Moulos, Vrettos},
  journal={Advances in NeurIPS},
  volume={32},
  year={2019}
}

@article{smith1955regenerative,
  title={Regenerative stochastic processes},
  author={Smith, Walter L},
  journal={Proc. R. Soc. Lond. A Math. Phys. Sci.},
  volume={232},
  number={1188},
  pages={6--31},
  year={1955}
}

@article{lawler1988bounds,
  title={Bounds on the $L^2$ spectrum for Markov chains and Markov processes: a generalization of Cheeger’s inequality},
  author={Lawler, Gregory F and Sokal, Alan D},
  journal={Trans. Amer. Math. Soc.},
  volume={309},
  number={2},
  pages={557--580},
  year={1988}
}

@book{feller1991introduction,
  title={An introduction to probability theory and its applications, Volume 2},
  author={Feller, William},
  volume={2},
  year={1991},
  publisher={John Wiley \& Sons}
}

@article{moyne2007emergence,
  title={The emergence of industrial control networks for manufacturing control, diagnostics, and safety data},
  author={Moyne, James R and Tilbury, Dawn M},
  journal={Proc. {IEEE}},
  volume={95},
  number={1},
  pages={29--47},
  year={2007},
  publisher={IEEE}
}

@article{kakria2015real,
  title={A real-time health monitoring system for remote cardiac patients using smartphone and wearable sensors},
  author={Kakria, Priyanka and Tripathi, NK and Kitipawang, Peerapong},
  journal={Int. J. Telemed. Appl.},
  volume={},
  number={},
  pages={},
  year={2015},
  publisher={Wiley Online Library}
}

@article{liu2020computing,
  title={Computing systems for autonomous driving: State of the art and challenges},
  author={Liu, Liangkai and Lu, Sidi and Zhong, Ren and Wu, Baofu and Yao, Yongtao and Zhang, Qingyang and Shi, Weisong},
  journal={{IEEE} Internet Things J.},
  volume={8},
  number={8},
  pages={6469--6486},
  year={2020},
  publisher={IEEE}
}

@INPROCEEDINGS{9355609,
  author={Liu, Qiang and Han, Tao and Moges, Ephraim},
  booktitle={2020 IEEE 40th International Conference on Distributed Computing Systems (ICDCS)}, 
  title={EdgeSlice: Slicing Wireless Edge Computing Network with Decentralized Deep Reinforcement Learning}, 
  year={2020},
  volume={},
  number={},
  pages={234-244}
}

@ARTICLE{10173679,
  author={Shah, Syed Danial Ali and Gregory, Mark A. and Li, Shuo},
  journal={IEEE Internet of Things Journal}, 
  title={Toward Network-Slicing-Enabled Edge Computing: A Cloud-Native Approach for Slice Mobility}, 
  year={2024},
  volume={11},
  number={2},
  pages={2684-2700}
}

@ARTICLE{8016573,
  author={Mao, Yuyi and You, Changsheng and Zhang, Jun and Huang, Kaibin and Letaief, Khaled B.},
  journal={IEEE Communications Surveys \& Tutorials}, 
  title={A Survey on Mobile Edge Computing: The Communication Perspective}, 
  year={2017},
  volume={19},
  number={4},
  pages={2322-2358}
}

@ARTICLE{6824752,
  author={Andrews, Jeffrey G. and Buzzi, Stefano and Choi, Wan and Hanly, Stephen V. and Lozano, Angel and Soong, Anthony C. K. and Zhang, Jianzhong Charlie},
  journal={IEEE Journal on Selected Areas in Communications}, 
  title={What Will 5G Be?}, 
  year={2014},
  volume={32},
  number={6},
  pages={1065-1082}
}

@article{ren2019survey,
  title={A survey on end-edge-cloud orchestrated network computing paradigms: Transparent computing, mobile edge computing, fog computing, and cloudlet},
  author={Ren, Ju and Zhang, Deyu and He, Shiwen and Zhang, Yaoxue and Li, Tao},
  journal={ACM Computing Surveys (CSUR)},
  volume={52},
  number={6},
  pages={1--36},
  year={2019},
  publisher={ACM New York, NY, USA}
}

\end{document}